\newtheorem{theorem}{{Theorem}}
\newtheorem{lemma}{{Lemma}}
\newtheorem{conj}{{Conjecture}}
\newtheorem{proposition}{{Proposition}}
\newtheorem{definition}{{Definition}}
\theoremstyle{remark}
\newtheorem{remark}{{Remark}}
\theoremstyle{definition}
\newtheorem{example}{Example}
\newcommand{\F}{\mathbb{F}}
\newcommand{\Z}{\mathbb{Z}}
\renewcommand{\mod}{\,\mathrm{mod}\,}
\newcommand{\Tr}{\mathrm{Tr}}
\newcommand{\bs}{\mathbf{s}}
\newcommand{\wt}{\mathrm{wt}}
\newcommand{\C}{\mathcal{C}}
\newcommand{\cD}{\mathcal{D}}
\newcommand{\cS}{\mathcal{S}}
\begin{document}
\title[codes from sequences]{Binary $[n,(n\pm1)/2]$ cyclic codes with good minimum distances from sequences}

\author{Xianhong Xie$^1$, Yaxin Zhao$^1$, Zhonghua Sun$^2$, Xiaobo Zhou$^1$}
\address{$^1$School of Information and Artificial Intelligence, Anhui Agricultural University, Hefei, Anhui 230036, China}
\address{$^2$School of Mathematics, Hefei University of Technology,
	Hefei, Anhui 230601, China}

\email{xianhxie@ahau.edu.cn}
\email{zyx20001130@163.com}
\email{sunzhonghuas@163.com}
\email{zxb@ahau.edu.cn}

\thanks{Partially supported by  National Natural Science Foundation of  China under Grant 62371004 and 62002093.}

\begin{abstract}
Recently, binary cyclic codes with parameters $[n,(n\pm1)/2,\geq \sqrt{n}]$ have been a hot topic since their minimum distances have a square-root bound. In this paper, we construct four classes of binary cyclic codes $\C_{\cS,0}$, $\C_{\cS,1}$ and $\C_{\cD,0}$, $\C_{\cD,1}$ by using two families of sequences, and obtain some codes with parameters  $[n,(n\pm1)/2,\geq \sqrt{n}]$.  For $m\equiv2\pmod4$, the code $\C_{\cS,0}$ has parameters $[2^m-1,2^{m-1},\geq2^{\frac{m}{2}}+2]$, and the code $\C_{\cD,0}$ has parameters $[2^m-1,2^{m-1},\geq2^{\frac{m}{2}}+2]$ if $h=1$ and $[2^m-1,2^{m-1},\geq2^{\frac{m}{2}}]$ if $h=2$.

\smallskip

\noindent\textbf{Keywords} Cyclic code, Minimum distance, BCH bound, Si-Ding sequence, Ding-Zhou sequence.
\end{abstract}
\maketitle

\section{Introduction}
Set $n=2^m-1\geq 1$.
 Binary cyclic codes have been studied for decades and a lot of progress has been made since they  have wide applications in storage and communication systems. One method of constructing cyclic codes over $\F_{2^m}$ is to take \[m(x)=\frac{x^n-1}{\gcd(c^n(x),x^n-1)}\] as the generator polynomial, where $c^{n}(x)=s_{0}+s_{1}x+\cdots+s_{n-1}x^{n-1}\in\F_{2^m}[x]$ and  $\bs=\{s_{0},s_{1},s_{2},\ldots\}$ is a binary sequence over $\F_{2^m}$ with period $n$. The polynomial $m(x)$ is called the minimal polynomial of $\bs$ and the degree of $m(x)$ is referred to its linear complexity.  Throughout this paper, the cyclic code $\C_{\bs}$ with the generator polynomial $m(x)$ is called the code defined by the sequence $\bs$.

It is well known that for any sequence $\bs=(s_t)_{t=0}^{n-1}$ over $\F_{2^m}$ of period $n$, the component $s_t$ of $\bs$ can be written as
\begin{equation}\label{se}s_t=\sum_{i=0}^{n-1}a_i\alpha^{it},\quad a_i\in\F_{2^m},\ 0\leq t\leq n-1.\end{equation}
From \cite{Ant}, we know that the minimal polynomial $m(x)$ of the sequence of ~\eqref{se} is \[\prod_{i\in I}(1-\alpha^i x),\] where $I=\{i:a_i\neq0\}$. Note that some of the sequences with good properties, but not all such sequences, can produce optimal cyclic codes. Thus, an interesting problem is proposed: how to select the suitable sequence $\bs$ such that $\C_{\bs}$ has good parameters.

Ding \cite{Din1,Din5,dinc} firstly constructed some cyclic codes with good parameters by using the two-prime sequences and the cyclotomic sequences of order four. Subsequently, Ding\ \cite{Din2} defined a new sequence $\bs=(s_0,s_1,\ldots,s_{n-1})$ by using the polynomial
$f(x)$ over $\F_{2^m}$, where  \begin{equation}\label{s1}s_t=\Tr_{2^m/2}(f(1+\alpha^t)),\quad t\geq0.\end{equation}

 By selecting suitable monomials and trinomials $f(x)$ over $\F_{2^m}$, some binary cyclic codes with good properties were presented in \cite{Din2,Din3}, and the lower bounds on the minimum distance were developed. For more results of sequence construction, please refer to the survey paper \cite{Din4}. Recently, binary cyclic codes with parameters $[n,(n\pm1)/2,d\geq\sqrt{n}]$ have been widely studied, but only a few cases are known (see \cite{Din5,Xiong1,Xiong2,Liu,Sun}). The objective of this paper is to study the parameters of the codes $\C_{\bs}$ defined by $f(x)=x^{2^m-2}$ and $x+x^{2^m-2}+x^{2^h-1}$, and then obtain some codes with parameters $[n,(n\pm1)/2,d\geq\sqrt{n}]$.

For $f(x)=x^{2^m-2}$, the corresponding sequence is denoted as $\cS$. Ding \cite{Din2} proved that the minimum distance $d(\C_\cS)$ satisfies $d(\C_\cS)^2\geq n$ if $2\nmid m$, and $d(\C_\cS)$  may not have $d(\C_\cS)^2\geq n$ if $2\mid m$. The first work of this paper is to study the parameters of $\C_{\cS}$ when $m$ is even. We first provide a new method to prove that the linear complexity of $\cS$ is $2^{m-1}$, and then give the dimension of $\C_\cS$. Secondly, we redefine the generator polynomial of $\C_{\cS}$, showing that the new codes $\C_{\cS,0}$ and $\C_{\cS,1}$ are subcodes of binary cyclic codes defined by Tang et al.\ \cite{Tang1}. Finally, we give the lower bounds on $d(\C_{\cS,0})$ and $d(\C_{\cS,1})$, and obtain some codes with parameters $[2^m-1,2^{m-1}-2,\geq 2^{\frac{m}{2}}+2]$ and $[2^m-1,2^{m-1},\geq 2^{\frac{m}{2}}+2]$.

For $f(x)=x+x^{2^m-2}+x^{2^h-1}$, where $0\leq h\leq \lceil\frac{m}{2}\rceil$, the corresponding sequence is denoted as $\cD$. Ding \cite{Din3}\ showed that the minimum distance $d(\C_{\cD})\geq8$ if $h=0$ and $2\nmid m$ , and $\geq3$ if $h=0$ and $2\mid m$. However, for $h>0$, the lower bounds on  $d(\C_{\cD})$ can not be given. The second work of this paper is to give new lower bounds on the minimum distance of the code $\C_{\cD}$, and showing that there exist some codes with parameters $[2^m-1,2^{m-1}-2,\geq 2^{\frac{m}{2}}+4]$ and $[2^m-1,2^{m-1},\geq 2^{\frac{m}{2}}]$.

\section{Preliminaries}
In this section, we give some basics about cyclotomic coset. From now on, we always assume $q=2^m$ and $n=q-1$.
\subsection{Notations} Throughout this paper we shall adopt the following notations.
\begin{itemize}
\item For any $i$ with $1\leq i\leq 2^m-1$, its $2$-adic expansion is $i=\sum\limits_{j=0}^{m-1}i_j2^j$, $i_j\in\{0,1\}$, define $\wt_2(i)=\sum\limits_{j=0}^{m-1}i_j$.

\item For $i\in\Z$, let $v_2(i)$ be the $2$-adic valuation of $i$, and $i^c:=\frac{i}{v_2(i)}$.

\item For $i\in\Z_n=\{0,1,\ldots,n-1\}$, let \[C_i=\{i2^{j}\bmod n,\ 0\leq j\leq l_i-1\},\] where $l_i$ is the least integer such that $i\equiv i2^{l_i}\bmod n$. Let $\Gamma$ denote the set of all coset leaders.

\item $\alpha$ is a generator of $\F_{2^m}^*$.

\item The set $T:=\{i:m(\alpha^i)=0,0\leq i\leq n-1\}$ is called the defining set of $\C$ with respect to the $n$-th primitive root of unity $\alpha$, where $m(x)\in\F_q[x]$ is a generator polynomial of $\C$.

\item For $m>0$, $\Tr_{2^m/2}(x)=\sum\limits_{t=0}^{m-1}x^{2^t}$ is the trace map from $\F_{2^m}$ to $\F_2$.

\item According to the Database we mean the collection of the tables of best linear codes known maintained by Markus Grassl at \emph{http://www.codetables.de/}.
\end{itemize}
\subsection{Cyclotomic coset}
For any integer $t$ with $1\leq t\leq m$, set
\begin{equation}\label{c}
  \epsilon_a^{(t)}=\begin{cases}
   1, & \mbox{if } a=2^t-1 \\
\bigl\lceil\log_2(\frac{2^t-1}{a})\bigr\rceil, & \mbox{if } 1\leq a<2^t-1.
    \end{cases}
\end{equation}The following lemma can be directly obtained from the definition of $\epsilon_a^{(t)}$.
\begin{lemma}\label{lem:ep}
  Set $1\leq a\leq 2^t-2$. Then
  \begin{itemize}
    \item $\epsilon_a^{(t+1)}=\epsilon_a^{(t)}+1$;
    \item $\epsilon_{2^k a}^{(t)}=\epsilon_a^{(t)}-k$ for any $0\leq k\leq \epsilon_a^{(t)}-1$.
  \end{itemize}
\end{lemma}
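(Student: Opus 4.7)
The plan is to unfold the definition of $\epsilon_a^{(t)}$ in each part and reduce the two claims to a careful bookkeeping of the ceiling function. Since $1\le a\le 2^t-2<2^{t+1}-1$, the second clause of the definition gives $\epsilon_a^{(t)}=\lceil\log_2((2^t-1)/a)\rceil$, and the same clause will apply to $\epsilon_a^{(t+1)}$ and (after a range check) to $\epsilon_{2^ka}^{(t)}$.

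For the first item, I would first rewrite
\[ \epsilon_a^{(t+1)} = \bigl\lceil\log_2\tfrac{2^{t+1}-1}{a}\bigr\rceil = 1 + \bigl\lceil\log_2\tfrac{2^t-1/2}{a}\bigr\rceil, \]
which reduces the claim to showing $\lceil\log_2\tfrac{2^t-1/2}{a}\rceil=\lceil\log_2\tfrac{2^t-1}{a}\rceil$. Writing $j$ for the left-hand ceiling, equality can fail only if some integer power $2^{j-1}$ lies in the half-open interval $[(2^t-1)/a,\,(2^t-1/2)/a)$; multiplying through by $a$ forces the integer $a\cdot 2^{j-1}$ into $[2^t-1,2^t-1/2)$, so $a\cdot 2^{j-1}=2^t-1$. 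Since $2^t-1$ is odd, this forces $j-1=0$ and $a=2^t-1$, contradicting $a\le 2^t-2$; the edge case $j=0$ is excluded separately since it would force $a\ge 2^t$.

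For the second item, I plan to use the hypothesis $k\le\epsilon_a^{(t)}-1$ to deduce $\lceil\log_2((2^t-1)/a)\rceil\ge k+1$, hence $(2^t-1)/a>2^k$, and thus $1\le 2^ka\le 2^t-2$. The second clause of the definition then applies and yields
\[ \epsilon_{2^ka}^{(t)}=\bigl\lceil\log_2\tfrac{2^t-1}{2^ka}\bigr\rceil=\bigl\lceil\log_2\tfrac{2^t-1}{a}-k\bigr\rceil=\bigl\lceil\log_2\tfrac{2^t-1}{a}\bigr\rceil-k=\epsilon_a^{(t)}-k, \]
using the identity $\lceil x-k\rceil=\lceil x\rceil-k$ valid for integer $k$.

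The hard part will be the first item: one must rule out that the small gap between $2^t-1$ and $2^t-1/2$ accidentally straddles a power of $2$ after dividing by $a$. As sketched above, this is impossible because $2^t-1$ is odd and $a$ is at most $2^t-2$. The second item is essentially a direct manipulation, with the only subtlety being the range check on $2^ka$.
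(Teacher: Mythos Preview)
Your argument is correct. The paper does not give a detailed proof of this lemma---it merely states that the result ``can be directly obtained from the definition of $\epsilon_a^{(t)}$''---and your proposal does exactly that, carefully unwinding the ceiling in both items; the one subtle point (that the gap between $(2^t-1)/a$ and $(2^t-\tfrac12)/a$ cannot contain a power of $2$ because $2^t-1$ is odd and $a\le 2^t-2$) is handled cleanly.
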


 Set $C_a^{(t)}:=\{2^ia:0\leq i\leq \epsilon_a^{(t)}-1\}$. By Lemma~\ref{lem:ep}, we have \[C_{2^ka}^{(t)}\subseteq C_a^{(t)}\ \text{for any}\ 0\leq k\leq \epsilon_a^{(t)}-1.\] Thus we can assume $a$ is odd and the following result can be obtained
\[\bigcup_{1\leq 2j+1< 2^t-1}C_{2j+1}^{(t)}=\Z_{2^t-1}\setminus\{0\}=\{1,2,\ldots,2^t-2\}.\]

For any odd integer $a$ with $1\leq a\leq 2^t-2$, there must exist a smallest integer $l_a$ such that $a2^{l_a}\equiv a\pmod{2^t-1}$. Take $s_a:=1+2^{l_a}+2^{2l_a}+\cdots+2^{t-l_a}$. Then $a$ can be written as
\begin{equation}\label{aq}a=s_a+i_12s_a+
\cdots+i_{l_a-1}2^{l_a-1}s_a,\ i_j\in\{0,1\}.\end{equation}

\begin{proposition}\label{lem2}Set $N:=|\{j:i_j=1,1\leq j\leq l_a-1\}|$ satisfying $N<l_a-1$. Then \[C_a:=\{a2^i:\ 0\leq i\leq l_a-1\}=\bigcup_{i=1}^N C_{2^{l_a-j_i}a}^{(t)}\bigcup C_{a}^{(t)},\]
i.e., the number of even integers in $C_a$ and $C_{a^c}$ are equal respectively to $l_a-N-1$ and $N+1$.

In particular, if $l_a=t$, then the number of even integers are $t-\wt_2(a)$
and $\wt_2(a)$, respectively.
\end{proposition}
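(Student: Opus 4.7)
The plan is to exploit the standard dictionary between the coset $C_a$ modulo $2^t-1$ and the cyclic shifts of the length-$t$ binary expansion of $a$: since $2x\bmod(2^t-1)$ is obtained by a left cyclic shift of $x$'s $t$-bit expansion, the condition $a\cdot 2^{l_a}\equiv a\pmod{2^t-1}$ says exactly that the $t$-bit string of $a$ is invariant under cyclic shift by $l_a$ positions. Combining $a\cdot 2^{l_a}\equiv a\pmod{2^t-1}$ with $2^t\equiv 1\pmod{(2^t-1)/\gcd(a,2^t-1)}$ forces $l_a\mid t$, so $s_a=(2^t-1)/(2^{l_a}-1)$ is an integer and $k:=t/l_a$ counts the number of copies of the fundamental block in the periodic bit-string. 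Shift invariance then forces the $t$-bit expansion of $a$ to consist of $k$ identical copies of a length-$l_a$ block $(i_{l_a-1},\ldots,i_1,i_0)$ with $i_0=1$ (because $a$ is odd); this is precisely the decomposition~\eqref{aq}, and it gives $\wt_2(a)=k(N+1)$.

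I would then count the even elements of $C_a$ directly from the shift picture. Writing $a=\sum_{j=0}^{t-1}a_j 2^j$, one checks that $2^k a\bmod(2^t-1)=\sum_{j=0}^{t-1}a_{(j-k)\bmod t}2^j$, so the least significant bit of $2^k a\bmod(2^t-1)$ equals $a_{(t-k)\bmod t}$; the element is odd precisely when that bit equals $1$. As $k$ runs through $\{0,1,\ldots,l_a-1\}$, the indices $(t-k)\bmod t$ sweep out $0,l_a-1,l_a-2,\ldots,1$ modulo $l_a$, i.e.\ they read off the bits $i_0,i_{l_a-1},\ldots,i_1$, of which exactly $N+1$ equal $1$. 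Hence $C_a$ contains $N+1$ odd elements and $l_a-N-1$ even ones; the odd elements are $a$ itself together with the $N$ elements $2^{l_a-j_i}a\bmod(2^t-1)$ indexed by those $j_i\in\{1,\ldots,l_a-1\}$ with $i_{j_i}=1$. Starting from any odd $b\in C_a$, the successive doublings $b,2b,\ldots,2^{\epsilon_b^{(t)}-1}b$ all remain below $2^t-1$ (so need no reduction) and form exactly $C_b^{(t)}$; the next doubling exceeds $2^t-1$ and, upon subtracting the odd number $2^t-1$, produces the next odd element of $C_a$. This yields the stated disjoint decomposition $C_a=C_a^{(t)}\cup\bigcup_{i=1}^{N}C_{2^{l_a-j_i}a}^{(t)}$.

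The special case $l_a=t$ is now immediate: $s_a=1$, the representation~\eqref{aq} becomes the ordinary $t$-bit expansion of $a$, and $\wt_2(a)=N+1$, giving $\wt_2(a)$ odd and $t-\wt_2(a)$ even elements. The only delicate point in the whole argument is keeping the bit-indexing consistent when translating between multiplication by $2^k$ modulo $2^t-1$ and a left cyclic shift of the $t$-bit string of $a$; this is routine bookkeeping rather than a genuine obstacle.
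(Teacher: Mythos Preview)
Your argument is correct and follows essentially the same route as the paper: partition $C_a$ into maximal runs $C_b^{(t)}$ of consecutive doublings beginning at each odd element $b\in C_a$, then count. The paper simply asserts $\epsilon_{2^{l_a-j_k}a}^{(t)}=j_k-j_{k-1}$ and checks that the sizes sum to $l_a$; your cyclic-shift dictionary supplies the justification the paper omits and identifies the odd elements $a,\,2^{l_a-j_1}a,\ldots,2^{l_a-j_N}a$ explicitly. The count for $C_{a^c}$ (here $a^c=-a\bmod(2^t-1)$), which you do not spell out, is immediate from yours since $(2^t-1)-x$ is even exactly when $x$ is odd.
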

\begin{proof}Let $i_{j_1}=i_{j_2}=\cdots=i_{j_N}=1$ satisfying $1\leq j_1<j_2<\cdots<j_N\leq l_a-1$. Set $j_0=0$ and $j_{N+1}=l_a$. For two integers $j_l,j_k$ with $j_1\leq j_l<j_k<\cdots\leq j_{N+1}$, we have
\begin{align*}&C_{2^{l_a-j_l}a}^{(t)}=\{2^{l_a-j_l}a,2^{l_a-j_l+1}a,\ldots,2^{l_a-j_{l-1}-1}a\}\ \text{and}\\ &C_{2^{l_a-j_k}a}^{(t)}=\{2^{l_a-j_k}a,2^{l_a-j_k+1}a,\ldots,2^{l_a-j_{k-1}-1}a\}.\end{align*}
Obviously, $C_{2^{l_a-j_l}a}^{(t)}\cap C_{2^{l_a-j_k}a}^{(t)}=\emptyset$ and $C_{2^{l_a-j_i}a}^{(t)}\subseteq C_a$ for any $i=1,2,\ldots,N+1$. In addition, we have   $\epsilon_{2^{l_a-j_k}a}^{(t)}=j_k-j_{k-1}$ and $\sum\limits_{k=1}^{N+1}\epsilon_{2^{l_a-j_k}a}^{(t)}=l_a$. The proof can be completed.
\end{proof}

The following lemmas will be useful in the sequel.
\begin{lemma}\label{coset}
  For any coset leader $j\in\Gamma\setminus\{0\}$, $j$ is odd and $1\leq j\leq 2^{m-1}-1$.
\end{lemma}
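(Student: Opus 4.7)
The plan is to derive both claims from a single closure property: since $\gcd(2,n)=1$ (because $n=2^m-1$ is odd), the cyclotomic coset $C_j$ is closed not only under multiplication by $2$ modulo $n$, but also under multiplication by $2^{-1}\equiv 2^{m-1}\pmod n$. Indeed, using $2^{l_j}\cdot j\equiv j\pmod n$, one has $2^{l_j-1}\cdot j\equiv 2^{-1}\cdot j\pmod n$, so $2^{-1}\cdot j\bmod n\in C_j$. With this, both properties reduce to one-line arithmetic arguments.

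First, I would show that $j$ is odd. Assume for contradiction that $j\in\Gamma\setminus\{0\}$ is even, say $j=2k$ with $0<k<j<n$. Because $0<j<n$ and $j$ is even, the integer $j/2$ lies in $\{1,\ldots,n-1\}$ and satisfies $2\cdot(j/2)=j\equiv j\pmod n$, hence $j/2\equiv 2^{-1}\cdot j\pmod n$. By the closure noted above, $j/2\in C_j$, but $j/2<j$, contradicting the minimality of $j$ as a coset leader.

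Second, I would bound $j$ above. Suppose $j\geq 2^{m-1}$. Since $j\leq n-1=2^m-2$, we have $n<2j\leq 2n-2<2n$, so $2j\bmod n=2j-n$. The inequality $2j-n<j$ is equivalent to $j<n$, which holds. Thus $2j\bmod n\in C_j$ and $2j\bmod n<j$, again contradicting the minimality of $j$. Therefore $j\leq 2^{m-1}-1$, and combined with the previous step we get $1\leq j\leq 2^{m-1}-1$ with $j$ odd.

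There is no real obstacle: the whole argument rests only on $2^m\equiv 1\pmod n$ and the definition of a coset leader as the minimum of its orbit under $x\mapsto 2x\bmod n$. The only point that deserves explicit mention in the write-up is the passage from closure under $\times 2$ to closure under $\times 2^{-1}$, which is immediate from $2\cdot 2^{m-1}\equiv 1\pmod n$.
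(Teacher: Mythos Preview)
Your proof is correct. The paper itself does not give an argument: it simply cites \cite[Lemma~6]{si}. Your write-up is self-contained and uses only the definition of a coset leader together with $2^m\equiv 1\pmod n$, so it is a genuine improvement over a bare citation. One minor simplification: for the oddness part you do not really need to pass through $2^{-1}$-closure; it is enough to note that if $j=2k$ then $k\in C_k$ and $2k=j\in C_k$, hence $C_j=C_k$ with $k<j$, contradicting minimality. Your upper-bound argument is the standard one and is exactly what one would expect.
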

\begin{proof}
The proof can be obtained directly from \cite[Lemma~6]{si}.
\end{proof}
\begin{lemma}\label{1d}
	Suppose $\gcd(a,2^m-1)=1$ and $b\in\{aj:1\leq j\leq 2^{\frac{m}{2}}, 2\nmid j\}$ if $2\mid m$ and $b\in\{aj:1\leq j\leq 2^\frac{m+1}{2}-1, 2\nmid j\}$ if $2\nmid m$, then $l_b=m$.
\end{lemma}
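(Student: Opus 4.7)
The plan is to first reduce the statement to the case $a=1$, and then argue by contradiction that $l_j$ cannot be a proper divisor of $m$.

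For the reduction, I would use that $\gcd(a,2^m-1)=1$ makes $a$ a unit modulo $n=2^m-1$. Then the congruence $b\cdot 2^t\equiv b\pmod{n}$ can be multiplied through by $a^{-1}$ to yield $j\cdot 2^t\equiv j\pmod{n}$, so $l_b=l_j$. This eliminates $a$ from the problem, and it suffices to prove $l_j=m$ for the specified odd $j$.

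For the core step, I would suppose $l_j<m$. Since $l_j$ always divides $m$ (the orbit of $j$ under multiplication by $2$ mod $n$ has size dividing $\ord_{n}(2)=m$), it must be a proper divisor of $m$. From $(2^m-1)\mid j(2^{l_j}-1)$ and positivity of both sides, I would derive
\[
j\;\geq\;\frac{2^m-1}{2^{l_j}-1}\;=\;1+2^{l_j}+\cdots+2^{m-l_j}\;\geq\;2^{m-l_j}+1,
\]
and then compare this against the stated upper bound on $j$.

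The main (and only mild) obstacle is choosing the right lower bound on $m-l_j$ in each parity class of $m$. For $m$ even, the largest proper divisor of $m$ is $m/2$, so the displayed inequality yields $j\geq 2^{m/2}+1$, contradicting $j\leq 2^{m/2}$. For $m$ odd with $m\geq 3$, the smallest prime divisor of $m$ is at least $3$, so the largest proper divisor of $m$ is at most $m/3$; invoking $2m/3\geq (m+1)/2$ for $m\geq 3$, one obtains $j\geq 2^{(m+1)/2}+1$, contradicting $j\leq 2^{(m+1)/2}-1$. The edge case $m=1$ is vacuous since then $l_j=1=m$ automatically.
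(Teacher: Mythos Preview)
Your proposal is correct and follows essentially the same argument as the paper's proof: reduce to $a=1$ via $l_{aj}=l_j$, then from $(2^m-1)\mid j(2^{l_j}-1)$ deduce $j>2^{m-l_j}$ and bound $l_j$ by the largest proper divisor of $m$ ($m/2$ for even $m$, $m/3$ for odd $m$) to reach a contradiction. Your write-up is slightly more explicit about why $l_j\mid m$ and about the inequality $2m/3\geq (m+1)/2$, but the route is the same.
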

\begin{proof}Obviously, $l_{2^{\frac{m}{2}}+1}=\frac{m}{2}$ and $l_{aj}=l_j$ for any $j\in\Z_{2^m-1}$. For any integer $j\in\{j:1\leq j\leq 2^{\frac{m}{2}}, 2\nmid j\}$ if $2\mid m$, or $\{j:1\leq j\leq 2^\frac{m+1}{2}-1, 2\nmid j\}$ if $2\nmid m$, suppose $l_j$ is the least integer such that $j2^{l_j}\equiv j\pmod{2^m-1}$, then
\begin{equation*}\frac{2^m-1}{2^{l_j}-1}\mid j\Longrightarrow j>2^{m-l_j}.\end{equation*}

Suppose $l_j\neq m$. Note that $l_j\leq \frac{m}{3}$ if $2\nmid m$, and $l_j\leq \frac{m}{2}$ if $2\mid m$, then\[j>2^{\frac{2m}{3}}\ \text{if}\ 2\nmid m,\ \text{and}\ j>2^{\frac{m}{2}}\ \text{if}\ 2\mid m, \] which are impossible since $j\leq 2^{\frac{m+1}{2}}-1$ if $2\nmid m$ and $j\leq 2^{\frac{m}{2}}$ if $2\mid m$.
\end{proof}

\section{Binary cyclic codes from two classes of  sequences}

\subsection{A generic construction of {cyclic codes from sequences} }
Given any polynomials $f(x)\in\F_{2^m}[x]$, Ding et al.\ \cite{Din2,Din3} constructed the following sequence
\begin{align*}&\bs=(s_t)_{t=0}^{n-1}=\bigl(\Tr_{2^m/2}(f(1+\alpha^t))\bigr)_{t=0}^{n-1}\end{align*}

By selecting some suitable {polynomials}   $f(x)$, they obtained some cyclic codes $\C_{\bs}$ with good parameters. In this paper, we will continue their work, and construct four classes of cyclic codes $\C_{\bs}$ by selecting $f(x)=x^{2^m-2}$ and $x+x^{2^m-2}+x^{2^h-1}$, where $0<h\leq \lceil\frac{m}{2}\rceil$.

For some $i\in\Gamma$, let \begin{equation}\label{vd}\rho_i=|\{j:j\in C_i, 2\mid j\}|\ \text{and}\ v_i\equiv\frac{m\rho_i}{l_i}\pmod 2.\end{equation} Clearly, $v_0=1$ for odd $m$, and $v_0=0$ for even $m$.

\vskip 0.2cm
\noindent
\textbf{A. Si-Ding sequence and its associated codes.} Take \[T_{(j,m)}=\{C_i: v_i\equiv j\pmod2,\ i\in\Gamma\}.\]

Clearly, $T_{(1,m)}\cup T_{(0,m)}=\Z_{2^m-1}$, $C_0=\{0\}\in T_{(1,m)}$ if $2\nmid m$, and $C_0\in T_{(0,m)}$ if $2\mid m$.
By Proposition~\ref{lem2}, we have the following result.
\begin{lemma}\label{2d}If $2\mid l_a$, then the $2$-cyclotomic cosets $C_a$ and $C_{a^c}$ are all in $T_{(1,m)}$ or $T_{(0,m)}$. If $2\nmid l_a$, only one of $C_a$ and $C_{a^c}$ is in $T_{(1,m)}$.
\end{lemma}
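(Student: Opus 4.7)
The plan is to apply Proposition~\ref{lem2} to both $C_a$ and $C_{a^c}$ and then perform an elementary mod-$2$ computation on the defining identity $v_i \equiv m\rho_i/l_i \pmod 2$ of~\eqref{vd}. First, Proposition~\ref{lem2} supplies $\rho_a = l_a - N - 1$ and $\rho_{a^c} = N + 1$, while the cyclotomic action $i \mapsto 2i$ on $\Z_n$ commutes with the involution linking $a$ to $a^c$, so that $l_{a^c} = l_a$. Substituting these data into~\eqref{vd} yields the master identity
\[
v_a + v_{a^c} \equiv \frac{m(l_a - N - 1) + m(N+1)}{l_a} = m \pmod 2.
\]

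Next, since $l_a \mid m$, I write $m = k l_a$ so that $v_a \equiv k(l_a - N - 1)$ and $v_{a^c} \equiv k(N+1) \pmod 2$, and hence $v_a - v_{a^c} \equiv k l_a \pmod 2$. When $2 \mid l_a$ this difference is automatically zero, so $v_a \equiv v_{a^c}$ and both cosets lie in the same $T_{(j,m)}$, proving the first half. When $2 \nmid l_a$, the difference collapses to $k \pmod 2$; I will then combine the structural restrictions $1 \leq N+1 \leq l_a - 1$ from Proposition~\ref{lem2} with the parity of $k = m/l_a$ to conclude that $v_a$ and $v_{a^c}$ have opposite parities, placing exactly one of $C_a$, $C_{a^c}$ in $T_{(1,m)}$.

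The main obstacle is the parity bookkeeping in the $l_a$-odd case: the raw divisibility $l_a \mid m$ does not by itself fix the parity of $k = m/l_a$, so one must exploit the finer arithmetic structure of the coset supplied by Proposition~\ref{lem2}—in particular the interplay between the decomposition $a = s_a(1 + 2 i_1 + \cdots + 2^{l_a-1} i_{l_a-1})$ and the assumed divisibility constraints—to pin down the required parity. Once this is settled, the rest of the argument is direct substitution.
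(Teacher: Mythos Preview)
Your computation up through the master identity $v_a+v_{a^c}\equiv m\pmod 2$ is correct and is exactly what Proposition~\ref{lem2} gives; this already handles the case $2\mid l_a$ (since then $2\mid m$), and this matches the paper, which simply cites Proposition~\ref{lem2} without further argument.

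The gap is in the second half. You correctly isolate that, when $l_a$ is odd, everything hinges on the parity of $k=m/l_a$, and you correctly note that $l_a\mid m$ alone does not pin this down. But your proposed remedy---extracting the parity of $k$ from the decomposition $a=s_a(1+2i_1+\cdots+2^{l_a-1}i_{l_a-1})$ and the bound $1\le N+1\le l_a-1$---cannot succeed: those data constrain $N$ and $\rho_a$, not $k$. In fact the obstacle is not a bookkeeping difficulty but a genuine counterexample. Take $m=6$, $n=63$, $a=9$; then $l_a=3$ is odd, $C_a=\{9,18,36\}$ with $\rho_a=2$, and $C_{a^c}=C_{54}=\{27,45,54\}$ with $\rho_{a^c}=1$, so
\[
v_a\equiv\tfrac{6\cdot 2}{3}=4\equiv 0,\qquad v_{a^c}\equiv\tfrac{6\cdot 1}{3}=2\equiv 0\pmod 2,
\]
and both cosets lie in $T_{(0,6)}$. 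Thus the second assertion of the lemma is not true as literally stated when $l_a$ is odd but $m$ is even; no amount of ``finer arithmetic structure'' will rescue it. Your own identity $v_a+v_{a^c}\equiv m\pmod 2$ shows that the correct dichotomy is governed by the parity of $m$ rather than of $l_a$ (and this is precisely how the surrounding Lemma~\ref{lem5} actually uses the information). So the right move is not to look for extra structure, but to recognise that the hypothesis in the $l_a$-odd clause should be read as $2\nmid m$.
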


 \begin{definition}[\cite{si}]
   Take $f(x)=x^{2^m-2}$.
{Si-Ding sequence $\cS$} is 
\begin{equation}\label{siding}
\Tr_{2^m/2}((1+\alpha^t)^{2^m-2})=\bigl(\sum_{j\in\Gamma}v_j\sum_{i\in C_j}\alpha^{it}\bigr)_{t=0}^{n-1}=\bigl(\sum_{i\in T_{(1,m)}}\alpha^{it}\bigr)_{t=0}^{n-1}.
\end{equation}
 \end{definition}
They showed that the total number of nonzero coefficients of $\alpha^{it}$ in ~\eqref{siding} is $2^{m-1}$ and {the minimal polynomial $m_{\cS,1}(x)$} is\[\prod_{j\in T_{(1,m)}}(x-\alpha^j).\]

 The following lemma  shall provide a simpler proof on the size of $T_{(1,m)}$.

\begin{lemma}\label{lem5}
  With the same notations as above. Then $|T_{(1,m)}|=2^{m-1}$ and $|T_{(0,m)}|=2^{m-1}-1$.
  \end{lemma}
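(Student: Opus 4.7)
The plan is to establish the clean identity $v_i\equiv m+\wt_2(i)\pmod 2$ for every $i\in\Z_{2^m-1}$, and then do a simple parity count. This bypasses the finer case analysis suggested by Lemma~\ref{2d}.

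To derive the identity, fix $i\in\Z_{2^m-1}$ with $m$-bit binary expansion $i=\sum_{j=0}^{m-1}b_j2^j$. Multiplication by $2$ modulo $2^m-1$ is a cyclic left shift of this bit-string, so $C_i$ is the orbit of the string under shifts, and since $l_i\mid m$ the string is $l_i$-periodic. As $k$ runs over $\{0,1,\ldots,l_i-1\}$, the least significant bit of $2^k i\bmod (2^m-1)$ equals $b_{(m-k)\bmod m}=b_{(-k)\bmod l_i}$, and these indices hit every residue in $\{0,1,\ldots,l_i-1\}$ exactly once. Hence $\rho_i$ equals the number of zeros in one period of $i$'s expansion, giving $\rho_i=l_i-l_i\wt_2(i)/m=l_i(m-\wt_2(i))/m$. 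Substituting into $v_i\equiv(m\rho_i/l_i)\pmod 2$ yields
\[v_i\equiv m-\wt_2(i)\equiv m+\wt_2(i)\pmod 2,\]
which is consistent with Proposition~\ref{lem2} (whose count $N+1$ is precisely the Hamming weight of one period).

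With the identity in hand, $C_j\in T_{(1,m)}$ iff $\wt_2(j)\not\equiv m\pmod 2$. The $2^m$ $m$-bit strings $\{0,1,\ldots,2^m-1\}$ split evenly, $2^{m-1}$ per weight parity. The set $\Z_{2^m-1}$ drops only the string $2^m-1$ (of weight $m$), leaving $2^{m-1}$ strings with $\wt_2\not\equiv m\pmod 2$ and $2^{m-1}-1$ with $\wt_2\equiv m\pmod 2$. Hence $|T_{(1,m)}|=2^{m-1}$ and $|T_{(0,m)}|=2^{m-1}-1$.

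The only real technical step is the bit-tracking in the first paragraph: checking that as $k$ ranges over $\{0,\ldots,l_i-1\}$, the index $(-k)\bmod l_i$ is a bijection onto $\{0,\ldots,l_i-1\}$, which follows from $l_i\mid m$. After that, the proof reduces to elementary parity counting.
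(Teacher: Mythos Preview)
Your proof is correct and genuinely more streamlined than the paper's. The heart of both arguments is the same relation $v_i\equiv m+\wt_2(i)\pmod 2$, but the routes to it differ. You obtain it in one stroke by observing that multiplication by $2$ modulo $2^m-1$ is a cyclic shift of the $m$-bit expansion, so the least significant bits seen across $C_i$ are exactly the bits of one period, giving $\rho_i=l_i-l_i\wt_2(i)/m$ and hence $m\rho_i/l_i=m-\wt_2(i)$. The paper instead works through Proposition~\ref{lem2}: it fixes an odd coset leader $a$, uses the decomposition $a=s_a+\sum_j i_j2^js_a$ to produce $\wt_2(a)=(N+1)m/l_a$ and $v_a=(l_a-N-1)m/l_a$, and then rather than simply adding these, it splits into cases on the parities of $m/l_a$ and $m$, tracks membership in $N_{(i,m)}$ versus $T_{(j,m)}$ separately, and finally counts $|N_{(i,m)}|$ via binomial sums $\sum_j\binom{m}{j}$. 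Your single-identity approach collapses all of that case analysis and replaces the binomial-sum count by the observation that exactly one of the $2^m$ $m$-bit strings (namely $2^m-1$, of weight $m$) is excluded from $\Z_{2^m-1}$. What the paper's longer argument buys is an explicit bridge to the sets $N_{(i,m)}$, which it reuses in Theorem~\ref{sdct} to identify $\C_{\cS,i}$ with the Tang--Ding codes $\C_{i,m}$; your identity yields that bridge too (indeed more transparently), so nothing is lost.
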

\begin{proof}
  Suppose {$a\in\Z_{2^m-1}\setminus\{0\}$} is an odd integer, and $l_a$ is the least integer such that $a2^{l_a}\equiv a\pmod{2^m-1}$. Then by ~\eqref{aq}, we have\[a=s_a+i_12s_a+
\cdots+i_{l_a-1}2^{l_a-1}s_a,\ i_j\in\{0,1\}.\] By Proposition~\ref{lem2}, we know
\[\wt_2(a)=(N+1)\frac{m}{l_a}\ \text{and}\ v_a=(l_a-N-1)\frac{m}{l_a}.\]

Set $N_{(i,m)}:=\{a\in\Z_{2^m-1}\setminus\{0\}:\wt_2(a)\equiv i\pmod2\}$. Then if $\frac{m}{l_a}$ is even, we know that $m$ is even and
\begin{equation}\label{e1}a\in N_{(0,m)}\ \text{and}\ a\in T_{(0,m)}.\end{equation}

If $\frac{m}{l_a}$ is odd, then we divide into two cases to consider.
\begin{itemize}
  \item $m$ is even, then $l_a$ is even and $\wt_2(a)\equiv v_a\pmod 2$. This means
  \begin{equation}\label{e2}a\in N_{(0,m)}\ \text{and}\ T_{(0,m)}\ \text{or}\ a\in N_{(1,m)}\ \text{and}\ T_{(1,m)}.\end{equation}
  \item $m$ is odd, then $l_a$ is odd, $\wt_2(a)\equiv N+1\pmod 2$ and $v_a\equiv N\pmod 2$. Therefore, we have
  \begin{equation}\label{e3}\wt_2(a^c)\equiv v_a\pmod2\ \text{and}\ \wt_2(a)\equiv v_{a^c}\pmod2.\end{equation}
\end{itemize}
 If $2\mid m$, then by Eqs.~\eqref{e1} and ~\eqref{e2}, we have
  \[|N_{(0,m)}|+1=|T_{(0,m)}|,\ \text{and}\ |N_{(1,m)}|=|T_{(1,m)}|.\]
 If $2\nmid m$, then by Eq.~\eqref{e3}, we obtain
    \[|N_{(0,m)}|+1=|T_{(1,m)}|,\ \text{and}\ |N_{(1,m)}|=|T_{(0,m)}|.\]

  From the definitions of $N_{(0,m)}$ and $N_{(1,m)}$, if $2\nmid m$, then
  \[|N_{(1,m)}|=\binom{m}{1}+\binom{m}{3}+\cdots+\binom{m}{m-2},\ |N_{(0,m)}|=\binom{m}{2}+\binom{m}{4}+\cdots+\binom{m}{m-1}.\]

  Note that $(1-1)^m=\sum\limits_{i=0}^{m}\binom{m}{i}(-1)^i=0$, $(1+1)^m=\sum\limits_{i=0}^{m}\binom{m}{i}=2^m$ and $a\in\Z_{2^m-1}\setminus\{0\}$, then
  $|N_{(1,m)}|=|N_{(0,m)}|=2^{m-1}-1$.

  If $2\mid m$, then
    \[|N_{(1,m)}|=\binom{m}{1}+\binom{m}{3}+\cdots+\binom{m}{m-1},\ |N_{(0,m)}|=\binom{m}{2}+\binom{m}{4}+\cdots+\binom{m}{m-2}.\]Then
    $|N_{(0,m)}|=2^{m-1}-2$, $|N_{(1,m)}|=2^{m-1}$. Therefore,
  \[|T_{(0,m)}|=2^{m-1}-1,\ \text{and}\ |T_{(1,m)}|=2^{m-1}.\]
  The proof can be completed.
\end{proof}

Next we construct the first class of code by using the sequence $\cS$ of ~\eqref{siding}. Take \begin{equation}\label{nws}M_{\cS,i}(x):=(x-1)^{\delta_{i,m}}m_{\cS,i}(x)=(x-1)^{\delta_{i,m}}\prod_{j\in T_{(i,m)}}(x-\alpha^j),\end{equation}where $\delta_{i,m}=1$ if $i=1$ and $2\mid m$, otherwise $\delta_{i,m}=0$, $i=0,1$. Clearly, $M_{\cS,1}(x)=(x-1)m_{\cS,1}(x)$ if $i=1$ and $2\mid m$, and the polynomial $M_{\cS,i}$ can be obtained directly from the sequence $\cS$ defined by ~\eqref{siding} if $2\nmid m$ or $2\mid m$ and $i=0$.
\begin{theorem}\label{dinth}Let
$M_{\cS,i}(x)$ be defined by ~\eqref{nws} and $\C_{\cS,i}=\langle M_{\cS,i}(x)\rangle$, where $i=0,1$. Then $\dim(\C_{\cS,i})=2^m-1-\deg(M_{\cS,i}(x))$ and $d(\C_{\cS,i})$ is even.
\end{theorem}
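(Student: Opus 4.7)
The plan is to establish the two assertions separately. For the dimension, I start from the observation that $T_{(i,m)}$ is, by the definition preceding Lemma~\ref{2d}, a union of complete $2$-cyclotomic cosets modulo $n=2^m-1$, and that the extra factor $(x-1)^{\delta_{i,m}}$ in \eqref{nws} corresponds to the coset $C_0$. Consequently $M_{\cS,i}(x)$ lies in $\F_2[x]$, is a product of pairwise distinct linear factors over $\F_{2^m}$ whose roots are $n$-th roots of unity, and therefore divides $x^n-1$. Hence $M_{\cS,i}(x)$ is a genuine generator polynomial, and the standard formula for a cyclic code of length $n$ delivers $\dim(\C_{\cS,i})=n-\deg M_{\cS,i}(x)$.

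For the evenness of $d(\C_{\cS,i})$, my approach is to show that $1$ is always a root of $M_{\cS,i}(x)$, i.e.\ $(x-1)\mid M_{\cS,i}(x)$. Evaluating any codeword $c(x)\in\C_{\cS,i}$ at $x=1$ then returns the sum of its coordinates in $\F_2$, which must vanish, forcing every codeword to have even Hamming weight and hence $d(\C_{\cS,i})$ to be even. Verification splits into four cases indexed by $i\in\{0,1\}$ and the parity of $m$. When $2\mid m$ and $i=1$, the factor $(x-1)$ is present by the explicit definition $\delta_{1,m}=1$. In the remaining combinations I invoke the data for $v_0$ recorded just after \eqref{vd}, namely $v_0=1$ for odd $m$ and $v_0=0$ for even $m$: this tells me whether $C_0\in T_{(1,m)}$ or $C_0\in T_{(0,m)}$, and in each case that is consistent with the definition of $M_{\cS,i}$ the coset $C_0$ already sits inside $T_{(i,m)}$, so the factor $(x-1)$ appears already in $m_{\cS,i}(x)$ and hence in $M_{\cS,i}(x)$.

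The dimension half is essentially bookkeeping once the cyclotomic-coset structure of $T_{(i,m)}$ is in place. The only genuinely delicate step is the even-distance half, where one must track carefully the interplay between the indicator $\delta_{i,m}$ and the membership of $C_0$ in $T_{(i,m)}$ so that $(x-1)$ appears in $M_{\cS,i}(x)$ with multiplicity exactly one, never zero and never two; in particular the $\delta_{1,m}=1$ insertion is needed precisely because for even $m$ one has $C_0\in T_{(0,m)}$ rather than $T_{(1,m)}$. With Lemma~\ref{lem5} and the stated values of $v_0$ at hand, this case analysis is routine and the proof proceeds without further obstacle.
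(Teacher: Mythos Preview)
Your approach coincides with the paper's: its entire proof is the one-liner ``the dimension is clear, and $d(\C_{\cS,i})$ is even since $x-1$ is a divisor of $M_{\cS,i}(x)$'', and you have simply supplied the details behind both assertions. The dimension half is correct.

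Your case analysis for $(x-1)\mid M_{\cS,i}(x)$, however, breaks down when $i=0$ and $m$ is odd. There $\delta_{0,m}=0$, and since $v_0=1$ for odd $m$ one has $C_0\in T_{(1,m)}$, hence $C_0\notin T_{(0,m)}$; thus $(x-1)\nmid m_{\cS,0}(x)=M_{\cS,0}(x)$, contrary to your blanket claim that ``the coset $C_0$ already sits inside $T_{(i,m)}$'' in all remaining cases. This is not merely a gap in your write-up: the evenness claim is actually false in this case. By Theorem~\ref{sdct} one has $\C_{\cS,0}=\C_{1,m}$ for odd $m$, and the paper itself later records (in the concluding summary, via $\C_{\cD,0}=\C_{1,7}$ for $h=1$) that $d(\C_{1,7})=19$, which is odd. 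So the paper's one-line proof carries the same unexamined assertion; the evenness conclusion is valid only in the three cases $(i=1,\ 2\mid m)$, $(i=1,\ 2\nmid m)$, and $(i=0,\ 2\mid m)$, precisely those in which $(x-1)$ genuinely divides $M_{\cS,i}(x)$.
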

\begin{proof}
 The dimension of $\C_{\cS,i}$ is clear, and $d(\C_{\cS,i})$ is even since $x-1$ is a divisor of $M_{\cS,i}(x)$.
\end{proof}

 We next show that the code $\C_{\cS,i}$ is subcode of the code $\C_{i,m}$, which is constructed by Tang et al.\ \cite{Tang1} and has
 the generator polynomial
\begin{equation}\label{sdc}g_{i,m}(x)=\mathop{\prod}_{\substack{1\leq j\leq n-1 \\ \wt_2(j)\equiv i\mod2}}(x-\alpha^j)=\mathop{\prod}_{\substack{1\leq j\leq n-1 \\ j\in N_{(i,m)}}}(x-\alpha^j),\ i\in\{0,1\}.\end{equation}

By Lemma~\ref{lem5}, Eqs.~\eqref{nws} and ~\eqref{sdc}, we can see that
\[T_{(1,m)}=N_{(0,m)}\cup\{0\}\ \text{and}\  T_{(0,m)}=N_{(1,m)}\ \text{if}\ 2\nmid m \] and
\[T_{(0,m)}=N_{(0,m)}\cup\{0\}\ \text{and}\  T_{(1,m)}=N_{(1,m)}\cup\{0\}\ \text{if}\ 2\mid m. \]
Therefore, the following result can be obtained directly.
\begin{theorem}\label{sdct}Let
	$g_{i,m}(x)$ be defined by ~\eqref{sdc} and $\C_{i,m}=\langle g_{i,m}(x)\rangle$, where $i=0,1$. Then $\dim(\C_{i,m})=2^m-1-\deg(g_{i,m}(x))=2^m-1-|N_{(i,m)}|$. Furthermore, \[\C_{\cS,0}=\C_{1,m}\ \text{and}\ \C_{\cS,1}=\C_{1,m}^\bot\subset\C_{0,m}\ \text{if}\ 2\nmid m,\ \text{and}\]\[ \C_{\cS,1}=\C_{0,m}^\bot\subset\C_{1,m}\ \text{and}\ \C_{\cS,0}=\C_{1,m}^\bot\subset\C_{0,m}\ \text{if}\ 2\mid m.\]
\end{theorem}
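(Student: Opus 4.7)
The plan is to establish both halves of the theorem by translating every assertion into defining-set language and invoking the identifications between $T_{(i,m)}$ and $N_{(j,m)}$ collected in the paragraph immediately before the statement. For the dimension, $g_{i,m}(x)$ is squarefree of degree $|N_{(i,m)}|$ by~\eqref{sdc}, so the standard formula $\dim\C=n-\deg g$ for cyclic codes immediately yields $\dim\C_{i,m}=2^m-1-|N_{(i,m)}|$.

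For the four code identifications I would, in each case, write down the defining set of both sides. By~\eqref{sdc}, $\C_{i,m}$ has defining set $N_{(i,m)}$; by~\eqref{nws}, $\C_{\cS,i}$ has defining set $T_{(i,m)}$ if $\delta_{i,m}=0$ and $T_{(i,m)}\cup\{0\}$ if $\delta_{i,m}=1$; and the dual $\C_{j,m}^{\bot}$ has defining set $\Z_n\setminus\{-k\bmod n:k\in N_{(j,m)}\}$. Equality of cyclic codes is equivalent to equality of defining sets, while a containment $\C\subseteq\C'$ corresponds to the reverse containment of defining sets.

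The only new arithmetic input needed is how negation modulo $n$ interacts with $N_{(i,m)}$: since $n-j$ has the bitwise complement of the binary digits of $j$, we have $\wt_2(n-j)=m-\wt_2(j)$, so $-N_{(i,m)}=N_{(i,m)}$ when $m$ is even and $-N_{(i,m)}=N_{(1-i,m)}$ when $m$ is odd. Combining this with the identities $T_{(1,m)}=N_{(0,m)}\cup\{0\}$, $T_{(0,m)}=N_{(1,m)}$ (odd $m$) and their even-$m$ analogues then collapses each claimed equality into a one-line set check, and each strict inclusion follows from the observation that $0$ sits in the defining set of the $\C_{\cS,i}$ side but not in that of $\C_{0,m}$ or $\C_{1,m}$.

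I do not expect any real difficulty beyond careful bookkeeping. The correction factor $(x-1)^{\delta_{i,m}}$ in~\eqref{nws} was engineered precisely so that $0$ enters the defining set of $\C_{\cS,i}$ in exactly the case in which the dual identification requires it, so the principal hazard is simply keeping the parity of $m$, the roles of $N_{(0,m)}$ versus $N_{(1,m)}$ under negation, and the presence or absence of $0$ straight in each branch. No structural idea beyond Lemma~\ref{lem5} and the binary-weight arithmetic above is required.
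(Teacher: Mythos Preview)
Your proposal is correct and follows essentially the same route as the paper: the paper states the $T_{(i,m)}$--$N_{(j,m)}$ identifications immediately before the theorem and then simply declares that ``the following result can be obtained directly,'' without writing out the dual defining-set computation or the negation-and-weight argument you supply. Your elaboration of $-N_{(i,m)}=N_{(i,m)}$ for even $m$ and $-N_{(i,m)}=N_{(1-i,m)}$ for odd $m$, together with the bookkeeping on where $0$ sits, is exactly the missing detail the paper leaves implicit.
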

When $m$ is odd. In \cite[Theorem~4.2]{Din2}, Ding showed that the minimum distances of the codes $\C_{\cS,1}$ and $\C_{\cS,1}^\bot$ satisfy $d^2\geq n$.  {In \cite[Theorems~14 and~15]{Tang1}}, Tang and Ding proved that \[d(\C_{0,m}),d(\C_{1,m})\geq\begin{cases}
                               2^{\frac{m-1}{2}}+1, & \mbox{if } m\equiv3\pmod4\geq3 \\
                               2^{\frac{m-1}{2}}+3, & \mbox{if } m\equiv1\pmod4\geq5.
                             \end{cases}\ \text{and}\]
\[d(\C^\bot_{0,m}),d(\C^\bot_{1,m})\geq\begin{cases}
                               2^{\frac{m-1}{2}}+2, & \mbox{if } m\equiv3\pmod4\geq3 \\
                               2^{\frac{m-1}{2}}+4, & \mbox{if } m\equiv1\pmod4\geq5.
                             \end{cases}\qquad  \]
                             
By Theorem~\ref{sdct}, the minimum distances of $\C_{\cS,1}$ and $\C_{\cS,0}$ also satisfy the above bounds.

When $m$ is even. Ding \cite{Din2} thought that the code may not have a good minimum distance by giving an $[15,7,3]$ code, and Tang and Ding \cite{Tang1} gave two examples $\C^\bot_{0,6}$ and $\C^\bot_{1,6}$ with parameters $[63,30,6]$ and $[63,32,10]$ respectively. We find that the minimum distances of $\C^\bot_{0,6}$ and $\C^\bot_{1,6}$ are bigger than their lower bounds presented by \cite[Theorem~13]{Tang1}, and this has been confirmed by checking more examples. Therefore, we will give a tight lower bounds of $\C_{\cS,1}$ and $\C_{\cS,0}$ when $m$ is even.

\vskip 0.2cm
\noindent
\textbf{B. Ding-Zhou sequence and {its associated codes}.} Take $f(x)=x+x^{2^m-2}+x^{2^h-1}$, $0< h\leq \lceil\frac{m}{2}\rceil$ . Let $\Gamma_1=\{a:1\leq a\leq 2^h-1, 2\nmid a\}$ and $\kappa_a^{(h)}\equiv \epsilon_a^{(h)}\pmod2$.  We define
\[u_j=\begin{cases}
	\kappa_1^{(h)}+v_1+1\pmod2, & \mbox{if } j=1 \\
	(\kappa_j^{(h)}+v_j)\pmod2, & \mbox{if } j\in\Gamma_1\setminus\{1\}\\
	v_j, & \mbox{if } j\in\Gamma\setminus\Gamma_1.
\end{cases}\]
Note that  $\kappa_1^{(h)}=h$, then
\begin{equation}\label{dz}u_j=\begin{cases}
	v_1+h+1\pmod2, & \mbox{if } j=1 \\
	(\kappa_j^{(h)}+v_j)\pmod2, & \mbox{if } j\in\Gamma_1\setminus\{1\}\\
	v_j, & \mbox{if } j\in\Gamma\setminus\Gamma_1.
\end{cases}\end{equation}
\begin{definition}
  Take $f(x)=x+x^{2^m-2}+x^{2^h-1}$. {Ding-Zhou sequence $\cD$} is 
\begin{equation}\label{siding1}
(\Tr_{2^m/2}(f(1+\alpha^t)))_{t=0}^{n-1}
=(\sum_{j\in \Gamma}u_j\sum_{i\in C_j}\alpha^{ti})_{t=0}^{n-1}.
\end{equation}
\end{definition}Take $D_{(j,m)}=\{C_i:u_i\equiv j \pmod2,i\in\Gamma\}$, where $j=0,1$.
By ~\eqref{siding1}, the minimal polynomial $m_{\cD,1}(x)$ of $\cD$ is
\begin{equation}\label{sigen1}
	 \prod\limits_{j\in D_{(1,m)}}(x-\alpha^j).
\end{equation}
\begin{lemma}
With the same notations as above. Then
$|T_{(j,m)}|=|D_{(j,m)}|$, $j=0,1$.
\end{lemma}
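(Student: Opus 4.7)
The plan is to reduce the claim to an elementary counting identity on an explicit subset of coset leaders, and then settle that identity by a short parity computation.

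I first identify the set $F:=\{j\in\Gamma:u_j\neq v_j\}$. From \eqref{dz} and $\kappa_j^{(h)}\equiv\epsilon_j^{(h)}\pmod 2$, we immediately see $F\subseteq\Gamma_1$, and in fact
\[F=\bigl\{j\in\Gamma_1\setminus\{1\}:\epsilon_j^{(h)}\text{ odd}\bigr\}\cup\bigl\{1:h\text{ even}\bigr\}.\]
Since every $j\in\Gamma_1$ satisfies $j\le 2^h-1\le 2^{\lceil m/2\rceil}-1$, Lemma~\ref{1d} (with $a=1$) gives $l_j=m$ for all $j\in\Gamma_1$, and hence for all $j\in F$. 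Because $u_j=v_j$ outside $F$ while $u_j\neq v_j$ on $F$, a direct count yields
\[|D_{(1,m)}|-|T_{(1,m)}|=\sum_{j\in F}l_j\bigl(1-2[v_j=1]\bigr)=m\bigl(|F_0|-|F_1|\bigr),\]
where $F_i:=\{j\in F:v_j=i\}$. Since both $T_{(0,m)}\cup T_{(1,m)}$ and $D_{(0,m)}\cup D_{(1,m)}$ partition $\Z_n$, the equality $|T_{(1,m)}|=|D_{(1,m)}|$ immediately implies $|T_{(0,m)}|=|D_{(0,m)}|$, so it suffices to prove $|F_0|=|F_1|$.

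For the latter, I would invoke the subcomputation in the proof of Lemma~\ref{lem5}, which shows that on nonzero coset leaders the parity of $v_j$ agrees with the parity of $\wt_2(j)$ up to a fixed shift depending only on the parity of $m$. Therefore $|F_0|=|F_1|$ is equivalent to
\[\sum_{j\in F}(-1)^{\wt_2(j)}=0.\]
Grouping the elements of $F\cap(\Gamma_1\setminus\{1\})$ by $w(j):=\lfloor\log_2 j\rfloor$ and using $\epsilon_j^{(h)}=h-w(j)$, the defining condition forces $w(j)$ to have parity opposite to that of $h$. For each $k\ge 2$, writing $j=2^k+r$ with $r$ odd in $[1,2^k-1]$ and invoking the standard identity $\sum_{s=0}^{2^{k-1}-1}(-1)^{\wt_2(s)}=0$, one obtains $\sum_{j\in\Gamma_1,\,w(j)=k}(-1)^{\wt_2(j)}=0$. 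So only $k\in\{0,1\}$ can contribute. When $h$ is even, the indices $j=1$ (with $k=0$) and $j=3$ (with $k=1$) both lie in $F$, contributing $(-1)^{\wt_2(1)}=-1$ and $(-1)^{\wt_2(3)}=+1$ respectively, so they cancel; when $h$ is odd, neither index lies in $F$. In either case the sum is $0$.

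The main bookkeeping obstacle I expect is the extra $+1$ in the formula for $u_1$, which distinguishes $j=1$ from the other elements of $\Gamma_1$. It is precisely this asymmetry that ensures the $j=1$ and $j=3$ contributions are coupled (both in $F$ when $h$ is even, both outside $F$ when $h$ is odd), and their mutual cancellation drives the entire argument.
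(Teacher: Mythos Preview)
Your proof is correct and follows essentially the same strategy as the paper: both arguments observe that $u_j=v_j$ outside $\Gamma_1$, handle the pair $\{1,3\}$ explicitly, and then use the fact that among the odd integers at each level $k\ge 2$ (i.e.\ with $\lfloor\log_2 j\rfloor=k$) the $2$-weights are evenly split between the two parities. The paper phrases this as ``within each level the counts $|\{k:v_k\equiv i\}|$ and $|\{k:u_k\equiv i\}|$ are both $2^{a-2}$,'' whereas you package the same computation through the flip set $F=\{j:u_j\neq v_j\}$ and the signed sum $\sum_{j\in F}(-1)^{\wt_2(j)}$; your formulation is somewhat more explicit and makes the cancellation between $j=1$ and $j=3$ transparent, but the underlying combinatorics is identical.
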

\begin{proof}
  For {$j=1$}, we have $v_1=(m-1)\pmod2$ and $u_1=(m+h)\pmod2$. For $j=3$, we have $v_3=m\pmod2$ and $u_3=(m+h-1)\pmod2$. {Independently of the parity of $m$, there are exactly one nonzero $v_j$ and $u_j$ for $j\in\{1,3\}$}. Note that \[\Gamma_1=\bigcup_{a=2}^{h-1}\{k:2^a<k<2^{a+1},2\nmid k\}\cup\{1,3\}.\]
 We have $|\{k:2^a<k<2^{a+1},2\nmid k\}|=2^{a-1}$ and \[|\{k:2^a<k<2^{a+1},2\nmid k,\wt_2(k)\equiv0\pmod2\}|=2^{a-2}.\] Therefore, by Lemma~\ref{1d}, for any odd integer $k$ with $2^a<k<2^{a+1}$, we have
 \[|T_{(0,m)}|=|T_{(1,m)}|=|D_{(0,m)}|=|D_{(1,m)}|=2^{a-2}.\]
It then { follows} that
\[|\{j\in\Gamma_1:v_j\equiv i\pmod2\}|=|\{j\in\Gamma_1:u_j\equiv i\pmod2\}|.\]Therefore, by Eq.~\eqref{dz}, the desired result can be obtained.
\end{proof}

Next we construct the second class of code by using the sequence $\cD$ of ~\eqref{siding1}. Take
\begin{equation}\label{degm}
	M_{\cD,i}=(x-1)^{\delta_{m,i}}\prod\limits_{j\in D_{(i,m)}}(x-\alpha^j),\end{equation}
where $\delta_{i,m}=1$ if $i=1$ and $2\mid m$, otherwise $\delta_{i,m}=0$.

\begin{theorem}
	Let $M_{\cD,i}(x)$ be defined by ~\eqref{degm} and $\C_{\cD,i}=\langle M_{\cD,i}(x)\rangle$. Then $\C_{\cD,i}$ has length $2^m-1$ and dimension $2^{m}-1-\deg(M_{\cD,i}(x))$.
\end{theorem}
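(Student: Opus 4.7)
The statement is a direct application of the standard correspondence between cyclic codes of length $n$ over $\F_2$ and divisors of $x^n-1$ in $\F_2[x]$. The content of the proof is therefore to verify that $M_{\cD,i}(x)$ really is such a divisor, after which the length and dimension formulas are routine.

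First I would check that $M_{\cD,i}(x)\in\F_2[x]$. The index set $D_{(i,m)}$ appearing in~\eqref{degm} is, by its definition just before~\eqref{sigen1}, a union of whole $2$-cyclotomic cosets $C_j$ modulo $n$, so the product $\prod_{j\in D_{(i,m)}}(x-\alpha^j)$ is invariant under the Frobenius map $x\mapsto x^2$ and hence has coefficients in $\F_2$. Multiplying by the prefactor $(x-1)^{\delta_{m,i}}\in\F_2[x]$ keeps us in $\F_2[x]$.

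Next I would show $M_{\cD,i}(x)\mid x^n-1$ in $\F_2[x]$. Since every $\alpha^j$ with $j\in\Z_n$, as well as $1$, is an $n$-th root of unity, each linear factor of $M_{\cD,i}(x)$ already divides $x^n-1$; what one has to rule out is a repeated factor, which can only arise from an overlap between the $(x-1)$ prefactor and a possible appearance of $j=0$ in $D_{(i,m)}$. By~\eqref{dz} we have $u_0=v_0$, which is $0$ when $m$ is even and $1$ when $m$ is odd. Hence when $\delta_{m,i}=1$ (the case $i=1$ and $2\mid m$) one has $0\in D_{(0,m)}$ but $0\notin D_{(1,m)}$, so the extra $(x-1)$ factor occurs in $M_{\cD,1}(x)$ with multiplicity one; when $\delta_{m,i}=0$ no extra factor is introduced and the question does not arise. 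In every case the linear factors of $M_{\cD,i}(x)$ are pairwise distinct divisors of $x^n-1$, so their product divides $x^n-1$.

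Once $M_{\cD,i}(x)\mid x^n-1$ is established, the principal ideal $\C_{\cD,i}=\langle M_{\cD,i}(x)\rangle$ of $\F_2[x]/(x^n-1)$ is by construction a cyclic code of length $n=2^m-1$, and the standard basis $\{M_{\cD,i}(x),xM_{\cD,i}(x),\ldots,x^{n-1-\deg M_{\cD,i}(x)}M_{\cD,i}(x)\}$ yields dimension $n-\deg(M_{\cD,i}(x))=2^m-1-\deg(M_{\cD,i}(x))$. I do not anticipate any substantive obstacle; the only bit of care is the bookkeeping at the coset leader $j=0$ in the divisibility step.
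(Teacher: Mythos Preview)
Your proposal is correct and is essentially the approach the paper takes: the paper's own proof consists of the single word ``Clear.''  You have supplied exactly the routine verification the paper omits, including the only point that requires any care (that the $(x-1)^{\delta_{m,i}}$ prefactor never duplicates a linear factor already present in $\prod_{j\in D_{(i,m)}}(x-\alpha^j)$, which follows from $u_0=v_0$ and the stated parity of $v_0$).
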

\begin{proof}
Clear.
\end{proof}

%

\section{Lower bounds on the minimum distance of $\C_{\cS,i}$}
In this section, we let $m$ be even and then study {the lower bounds on the minimum distance} of $\C_{\cS,i}$.
We firstly give some auxiliary results.

\subsection{Some auxiliary results}

\begin{lemma}\label{lemsd1}
	Set $m\equiv0\pmod4\geq4$ and $a=2^{\frac{m+2}{2}}-1$. Then
	\[ \{ak:1\leq k\leq 2^{\frac{m-2}{2}}\}\cup\{ak:2^m-2^{\frac{m-2}{2}}-1\leq k\leq 2^{m}-2\}\subseteq T_{(1,m)}.\]
\end{lemma}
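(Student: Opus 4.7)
The plan is to show that $v_{ak}\equiv 1\pmod 2$ for every $k$ in the stated union, which by the definition of $T_{(1,m)}$ puts the coset containing $ak$ into $T_{(1,m)}$. First I would verify $\gcd(a,2^m-1)=1$: from $2^m-1=(2^{m/2}-1)(2^{m/2}+1)$ together with $\gcd(2^{m/2+1}-1,2^{m/2}-1)=1$ and $\gcd(2^{m/2+1}-1,2^{m/2}+1)\mid 3$, the hypothesis $m\equiv 0\pmod 4$ forces $3\nmid 2^{m/2}+1$ and hence coprimality. Lemma~\ref{1d} then yields $l_{aj}=l_j=m$ for every odd $j$ with $1\le j\le 2^{m/2-1}$. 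Because multiplication by $2$ modulo $2^m-1$ cyclically shifts the $m$-bit binary expansion, $\wt_2$ is constant on each coset; combined with Proposition~\ref{lem2} this gives $\rho_{aj}=m-\wt_2(aj)$, so $v_{aj}\equiv m\rho_{aj}/l_{aj}\equiv \wt_2(aj)\pmod 2$ using that $m$ is even. Thus the lemma reduces to a parity statement about $\wt_2$.

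The main computation is to establish $\wt_2(ak)=m/2+1$ for every odd $k$ with $1\le k\le 2^{m/2-1}-1$. For such $k$ we have $ak<2^m-1$, so no reduction modulo $2^m-1$ is required. Writing $k=1+2k'$ with $k'<2^{m/2-2}$, I would decompose $ak=k\cdot 2^{m/2+1}-k=2^{m/2+2}k'+(2^{m/2+1}-1)-2k'$; the first two summands have disjoint bit supports, and the correction $2k'<2^{m/2-1}$ sits strictly inside the $(m/2+1)$-block of $1$'s at positions $0,\dots,m/2$, so subtracting it merely flips $1$'s to $0$'s with no borrow escaping that block. A direct bit count then gives: position $0$ is $1$; positions $1,\dots,m/2-2$ are $1-k_i$; positions $m/2-1$ and $m/2$ are $1$; position $m/2+1$ is $0$; positions $m/2+2,\dots,m-1$ reproduce $k_1,\dots,k_{m/2-2}$. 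Summing yields $m/2+1$ independent of $k$. Since $m\equiv 0\pmod 4$, the number $m/2+1$ is odd, and hence $v_{ak}\equiv 1\pmod 2$.

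The rest of the union is handled by coset invariance. For an even $k$ in the first range, write $k=2^sj$ with $j$ odd; then $C_{ak}=C_{aj}$ with $j\in[1,2^{m/2-1}-1]$, reducing to the odd case. For the second range, set $k'=2^m-1-k\in[1,2^{m/2-1}]$, so $ak\equiv -ak'\pmod{2^m-1}$. Since $\wt_2(2^m-1-x)=m-\wt_2(x)$ for $x\in[1,2^m-2]$ and $m$ is even, one has $v_{-ak'}\equiv \wt_2(ak')\equiv 1\pmod 2$, again by the first-range result applied to $k'$. The only genuinely technical step is the bit-level bookkeeping in the middle paragraph; once one sees that the $(m/2+1)$-block of $1$'s absorbs every borrow in $k\cdot 2^{m/2+1}-k$, the parity of $\wt_2(ak)$ is forced and the rest is routine cyclotomic-coset invariance.
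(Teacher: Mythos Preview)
Your proposal is correct and follows essentially the same route as the paper: verify $\gcd(a,2^m-1)=1$, invoke Lemma~\ref{1d} to get $l_{ak}=m$ for odd $k$, and then show $\wt_2(ak)=\tfrac{m+2}{2}$ by an explicit bit expansion of $ak=2^{m/2+1}k-k$. The only cosmetic differences are that the paper computes the gcd via $\gcd(2^a-1,2^b-1)=2^{\gcd(a,b)}-1$ rather than factoring $2^m-1$, and handles the second range by quoting Lemma~\ref{2d} (which packages exactly your complement observation $\wt_2(2^m-1-x)=m-\wt_2(x)$ together with $2\mid l_{ak}$) instead of writing out the parity directly.
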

\begin{proof}
	By Lemma~\ref{lem2}, the $2$-cyclotomic cosets containing $a$ and $n-a$ are in $T_{(1,m)}$. Hence it is sufficient to prove $\{a k:1\leq k\leq 2^{\frac{m-2}{2}},2\nmid k\}\subseteq T_{(1,m)}$.
	
	Note that \[\gcd(a,2^m-1)=2^{\gcd(\frac{m+2}{2},\frac{m-2}{2})}-1=2^{\gcd(\frac{m+2}{2},2)}-1=1.\] By Lemma~\ref{1d},
we then have $l_{ak_1}=m$ for any $k_1\in\{k:1\leq k\leq 2^{\frac{m-2}{2}},2\nmid k\}$. By Proposition~\ref{lem2}, we only show that \[\wt_2(ak_1)=\frac{m+2}{2}\ \text{for any}\ k_1\in\{k:1\leq k\leq 2^{\frac{m-2}{2}}-1,2\nmid k\}.\]

Suppose
	\[k_1=1+2^{i_{1,1}}+2^{i_{2,1}}+\cdots+2^{i_{l,1}}, \] where $1\leq l\leq \frac{m-4}{2}$ and $1\leq i_{1,1}<i_{2,1}<\cdots<i_{l,1}\leq \frac{m-4}{2}$.
	Then
	\begin{align*}
		ak_1 &= 2^{\frac{m+2}{2}}k_1-k_1=\sum_{j=1}^{l}2^{\frac{m+2}{2}+i_{j,1}}+2^{\frac{m+2}{2}}-1-
		\sum_{j=1}^{l}2^{i_{j,1}} \\
		&=\sum_{j=1}^{l}2^{\frac{m+2}{2}+i_{j,1}}+2^{\frac{m}{2}}+2^{\frac{m-2}{2}}+
		\sum_{j\in\{1,2,\ldots,\frac{m-4}{2}\}\setminus\{i_{j,1},\ldots,i_{l,1}\}}2^j+1.
	\end{align*}Clearly, $\wt_2(ak_1)=\frac{m+2}{2}$, $\rho_{ak_1}=\frac{m-2}{2}\equiv1\pmod2$. Hence the proof can be completed.
\end{proof}

\begin{lemma}\label{lemsd2}
	Set $m\equiv2\pmod4\geq6$ and $a=2^{\frac{m+4}{2}}-1$. Then
	\[ \{ak:1\leq k\leq 2^{\frac{m-4}{2}}\}\cup\{a k:2^m-2^{\frac{m-4}{2}}-1\leq k\leq 2^{m}-2\}\subseteq T_{(1,m)}.\]
\end{lemma}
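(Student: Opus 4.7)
The plan is to mimic the proof of Lemma~\ref{lemsd1}, adjusting the parity bookkeeping for $m\equiv 2\pmod 4$. First I would verify $\gcd(a,2^m-1)=1$. Writing $m=4j+2$ with $j\geq 1$ gives $\tfrac{m+4}{2}=2j+3$ and $\gcd(2j+3,4j+2)=\gcd(2j+3,4)=1$, hence $\gcd(2^{(m+4)/2}-1,2^m-1)=2^{1}-1=1$. By Lemma~\ref{1d} this yields $l_{ak_1}=m$ for every odd $k_1$ with $1\leq k_1\leq 2^{(m-4)/2}$.

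Next I would reduce the claim to odd $k$ in the first range. Because $C_{2^s j}\subseteq C_j$, any even $k$ in either range gives a coset already attained by an odd $k_1$. For the second range $\{ak:\,2^m-2^{(m-4)/2}-1\leq k\leq 2^m-2\}$, writing $k=n-k'$ with $1\leq k'\leq 2^{(m-4)/2}$ gives $ak\equiv -ak'\pmod n$, so that the coset of $ak$ is the coset of $n-ak'$. Since $\wt_2(n-j)=m-\wt_2(j)$ for $1\leq j\leq n-1$ and $l_{n-j}=l_j$, the formula $v_j\equiv \tfrac{m\rho_j}{l_j}\pmod 2$ together with $l_{ak'}=m$ and $m$ even forces $v_{-ak'}$ and $v_{ak'}$ to share the same parity. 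Thus it suffices to show $ak_1\in T_{(1,m)}$ for odd $k_1\in\{1,3,\ldots,2^{(m-4)/2}-1\}$.

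For such a $k_1$, write $k_1=1+2^{i_{1,1}}+\cdots+2^{i_{l,1}}$ with $1\leq i_{1,1}<\cdots<i_{l,1}\leq \tfrac{m-6}{2}$ (and $l=0$ if $k_1=1$), and expand
\[ ak_1=2^{\frac{m+4}{2}}k_1-k_1=\sum_{j=1}^{l}2^{\frac{m+4}{2}+i_{j,1}}+2^{\frac{m}{2}+1}+2^{\frac{m-2}{2}}+\!\!\sum_{\substack{0\leq j\leq (m-6)/2\\ j\notin\{i_{1,1},\ldots,i_{l,1}\}}}2^{j}. \]
The two blocks of exponents are disjoint because $\tfrac{m+4}{2}>\tfrac{m-6}{2}\geq i_{l,1}$, so this is already the binary expansion and $\wt_2(ak_1)=l+\bigl(\tfrac{m+2}{2}+1-l\bigr)=\tfrac{m+4}{2}$. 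Since $l_{ak_1}=m$, Proposition~\ref{lem2} gives $\rho_{ak_1}=m-\tfrac{m+4}{2}=\tfrac{m-4}{2}$, and $v_{ak_1}\equiv \tfrac{m-4}{2}\pmod 2$. With $m\equiv 2\pmod 4$ this equals an odd number, so $ak_1\in T_{(1,m)}$, completing the proof.

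The only delicate step is the bit-level verification that the expansion of $ak_1$ has no carries and that $\wt_2(ak_1)=\tfrac{m+4}{2}$; once that is in hand, everything else is a direct parity check against the definition of $T_{(1,m)}$ via~\eqref{vd}.
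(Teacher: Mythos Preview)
Your approach is exactly the paper's: it too just invokes the computation of Lemma~\ref{lemsd1} and records $\wt_2(ak_1)=\tfrac{m+4}{2}$ and $\rho_{ak_1}=\tfrac{m-4}{2}\equiv 1\pmod 2$. One slip to fix: your displayed binary expansion of $ak_1$ is missing two terms. The low block is $\{0,1,\ldots,\tfrac{m+2}{2}\}\setminus\{i_{1,1},\ldots,i_{l,1}\}$, and since all $i_{j,1}\le \tfrac{m-6}{2}$ it contains the \emph{four} exponents $\tfrac{m-4}{2},\tfrac{m-2}{2},\tfrac{m}{2},\tfrac{m+2}{2}$, not just the two you wrote; as displayed the sum has weight $\tfrac{m}{2}$, not $\tfrac{m+4}{2}$. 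Your weight count $l+\bigl(\tfrac{m+2}{2}+1-l\bigr)$ is nonetheless the correct one, so inserting $2^{m/2}+2^{(m-4)/2}$ into the display repairs the inconsistency and the rest of the argument goes through unchanged.
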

\begin{proof}
	Note that \[\gcd(a,2^m-1)=2^{\gcd(\frac{m+4}{2},\frac{m-4}{2})}-1=2^{\gcd(\frac{m+4}{2},4)}-1=1.\]

By using the same approach of Lemma~\ref{lemsd1}, we also have $\wt_2(ak_1)=\frac{m+4}{2}$ and $\rho_{ak_1}=\frac{m-4}{2}\equiv1\pmod2$ for any $k_1\in\{k:1\leq k\leq 2^{\frac{m-4}{2}}-1,2\nmid k\}$.
\end{proof}
\begin{lemma}\label{lemsd3}
	Let $l\geq1$, $2\nmid e>1$ and $m=2^le\geq4$, let $a=2^{\frac{m+2^l}{2}}+1$. Then
\[ \{ka:1\leq k\leq 2^{\frac{m-2^l}{2}}\}\cup\{ka:2^m-2^{\frac{m-2^l}{2}}-1\leq k\leq 2^{m}-1\}\subseteq T_{(0,m)}.\]
\end{lemma}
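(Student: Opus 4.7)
The plan is to mimic the proofs of Lemmas~\ref{lemsd1} and~\ref{lemsd2}. The first step is to verify that $\gcd(a, 2^m - 1) = 1$: setting $s := (m+2^l)/2 = 2^{l-1}(e+1)$, the oddness of $e$ gives $2 \mid (e+1)$, hence $2^l \mid s$. Since $a = 2^s + 1$ divides $2^{2s} - 1 = 2^{m+2^l} - 1$, the gcd divides $\gcd(2^{m+2^l} - 1, 2^m - 1) = 2^{2^l} - 1$; and from $2^{2^l} \equiv 1 \pmod{2^{2^l} - 1}$ together with $2^l \mid s$ one computes $a \equiv 2 \pmod{2^{2^l} - 1}$, so the gcd equals $\gcd(2, 2^{2^l} - 1) = 1$.

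Since cyclotomic cosets are invariant under multiplication by $2$, for the first set it suffices to show $C_{ak_1} \in T_{(0,m)}$ for each odd $k_1$ with $1 \le k_1 \le 2^{(m-2^l)/2} - 1$. Such $k_1$ lies below $2^{m/2}$, so Lemma~\ref{1d} supplies $l_{ak_1} = m$. Writing $k_1 = 1 + 2^{j_1} + \cdots + 2^{j_r}$ with $0 < j_1 < \cdots < j_r \le (m-2^l)/2 - 1$, the identity $ak_1 = k_1 + 2^{(m+2^l)/2} k_1$ expands as
\[ak_1 = 1 + \sum_{i=1}^{r} 2^{j_i} + 2^{(m+2^l)/2} + \sum_{i=1}^{r} 2^{(m+2^l)/2 + j_i},\]
whose exponents are pairwise distinct and all at most $m-1$, so $\wt_2(ak_1) = 2(r+1)$ is even. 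Proposition~\ref{lem2} then gives $\rho_{ak_1} = m - \wt_2(ak_1)$, also even since $m$ is even, so $v_{ak_1} \equiv 0 \pmod 2$ and $C_{ak_1} \in T_{(0,m)}$.

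The second set needs essentially the same analysis applied to $-ak_1$. Indeed, $k = 2^m - 1$ yields $ka \equiv 0$ and $C_0 \in T_{(0,m)}$ for even $m$, while every other $k$ in that range has the form $n - k'$, whose coset satisfies $C_{a(n-k')} = C_{-ak'}$; reducing $k'$ to its odd core $k_1$ as above and using $ak_1 < n$ (since $a(2^{(m-2^l)/2} - 1) < 2^m - 1$), the identity $\wt_2(n - ak_1) = m - \wt_2(ak_1)$ gives an even weight, and $l_{n - ak_1} = m$ follows because negation and doubling commute modulo $n$. Another application of Proposition~\ref{lem2} then places $C_{n-ak_1}$ in $T_{(0,m)}$. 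The main technical hurdle is the bookkeeping in the expansion of $ak_1$: one must use the constraint $j_r \le (m-2^l)/2 - 1$ to ensure every high exponent $(m+2^l)/2 + j_i$ stays strictly below $m$, preventing any reduction modulo $2^m - 1$ that would collapse bits and spoil the weight computation.
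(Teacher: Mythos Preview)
Your proof is correct and follows essentially the same approach as the paper: verify $\gcd(a,2^m-1)=1$, expand $ak_1$ for odd $k_1$ to see $\wt_2(ak_1)=2\wt_2(k_1)$, and invoke Proposition~\ref{lem2} with $l_{ak_1}=m$ to obtain $\rho_{ak_1}\equiv 0\pmod 2$. The only difference is that the paper's proof is extremely terse and does not explicitly treat the second set $\{ka : 2^m-2^{(m-2^l)/2}-1\le k\le 2^m-1\}$, implicitly relying (as in the proofs of Lemmas~\ref{lemsd1} and~\ref{lemsd2}) on Lemma~\ref{2d}: since $l_{ak_1}=m$ is even, $C_{ak_1}$ and $C_{n-ak_1}$ lie in the same $T_{(i,m)}$; you instead compute $\wt_2(n-ak_1)=m-\wt_2(ak_1)$ directly, which is equally valid and arguably cleaner.
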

\begin{proof}
 Note that $\gcd(\frac{m+2^l}{2},m)=\gcd(\frac{m+2^l}{2},2^l)=2^l$. Then $\gcd(2^{\frac{m+2^l}{2}}+1,2^m-1)=1$.
	
	For integer $k_1\in\{1\leq k\leq 2^{\frac{m-2^l}{2}}-1,2\nmid k\}$, set
	\[k_1=1+2^{i_{1,1}}+2^{i_{2,1}}+\cdots+2^{i_{l,1}},\]where $1\leq l\leq \frac{m-2^l-2}{2}$ and $1\leq i_{1,1}<i_{2,1}<\cdots<i_{l,1}\leq \frac{m-2^l-2}{2}$. Clearly, $\wt_2(ak_1)=2\wt_2(k_1)$, $\rho_{ak_1}\equiv0\pmod2$ for any $k_1\in\{1\leq k\leq 2^{\frac{m-2^l}{2}},2\nmid k\}$.
\end{proof}

\subsection{ {Lower bounds} on the minimum distance}
\begin{theorem}\label{1}Let $m=2^le>2$, where $2\nmid e$ and $l\geq1$, let $\C_{\cS,i}=\langle M_{\cS,i}(x)\rangle$ be defined by ~\eqref{nws}. Then $\C_{\cS,1}$ has parameters $[2^m-1,2^{m-1}-2,d(\C_{\cS,1})]$, where
	\[d(\C_{\cS,1})\geq \begin{cases}
		2^{\frac{m}{2}}+2, & \mbox{if } l\geq2, \\
		2^{\frac{m-2}{2}}+2, & \mbox{if } l=1 .
	\end{cases}\]
Furthermore, if $e\geq3$, then	$\C_{\cS,0}$ has parameters $[2^m-1,2^{m-1},\geq
	2^{\frac{m-2^l+2}{2}}+2]$. In particular, if $l=1$, then
	\[d(\C_{\cS,0})\geq 2^{\frac{m}{2}}+2.\]
\end{theorem}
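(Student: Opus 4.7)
My plan is to read the dimensions directly from Lemma~\ref{lem5} and then bound the minimum distances through a BCH-type argument applied with an invertible multiplier. From $|T_{(1,m)}|=2^{m-1}$ and $|T_{(0,m)}|=2^{m-1}-1$, and observing that $(x-1)\mid M_{\cS,1}$ by construction (since $\delta_{1,m}=1$ when $2\mid m$) while $(x-1)\mid M_{\cS,0}$ holds automatically because $C_0\in T_{(0,m)}$ for even $m$, one immediately obtains $\dim(\C_{\cS,1})=2^{m-1}-2$, $\dim(\C_{\cS,0})=2^{m-1}$, and even minimum distances for both codes. This matches the ``$+2$'' appearing at the end of every bound to be proved.

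The key tool I will use three times is the multiplier BCH bound: if $\gcd(a,n)=1$ and the defining set of a cyclic code of length $n$ contains $\{a(b+j)\bmod n : 0\le j\le \delta-2\}$, then the minimum distance is at least $\delta$. Indeed, viewing the code with respect to the primitive root $\alpha^{a}$ in place of $\alpha$ transforms the defining set by $a^{-1}$ into a genuinely consecutive run, at which point the classical BCH bound applies.

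For $\C_{\cS,1}$ with $l\ge 2$, choose $a=2^{(m+2)/2}-1$: Lemma~\ref{lemsd1} provides $\{ak: 1\le k\le 2^{(m-2)/2}\}\subseteq T_{(1,m)}$, and rewriting $\{ak: 2^m-2^{(m-2)/2}-1\le k\le 2^m-2\}$ as $\{ak: -2^{(m-2)/2}\le k\le -1\}\bmod n$ furnishes the matching negative half; together with $0$ (present because $(x-1)\mid M_{\cS,1}$), these form a window of $2^{m/2}+1$ consecutive multipliers of $a$, yielding $d(\C_{\cS,1})\ge 2^{m/2}+2$. For $l=1$, the identical argument with $a=2^{(m+4)/2}-1$ and Lemma~\ref{lemsd2} yields $d(\C_{\cS,1})\ge 2^{(m-2)/2}+2$. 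For $\C_{\cS,0}$ with $e\ge 3$, take $a=2^{(m+2^l)/2}+1$ and use Lemma~\ref{lemsd3}; its second range now ends at $k=2^m-1$, so $0\equiv a(2^m-1)\bmod n$ is already built in, and the resulting window has $2\cdot 2^{(m-2^l)/2}+1=2^{(m-2^l+2)/2}+1$ consecutive multipliers of $a$, giving $d(\C_{\cS,0})\ge 2^{(m-2^l+2)/2}+2$, which specializes to $2^{m/2}+2$ when $l=1$.

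The main technical point is the bookkeeping that glues the ``low'' range $1\le k\le 2^{\cdot}$ and the ``high'' range $k\ge n-2^{\cdot}-1$ into a single interval of consecutive integers around $0$ in $\Z_n$, bridged by the element $0$. This is legitimate only because $\gcd(a,2^m-1)=1$, which is exactly what is verified in the opening line of each of Lemmas~\ref{lemsd1}--\ref{lemsd3}, and because the factor $x-1$ in the generator polynomial places $0$ in the defining set. The hypothesis $e\ge 3$ for $\C_{\cS,0}$ is essential to keep the window nontrivial: when $e=1$ one has $(m-2^l)/2=0$ and the interval in Lemma~\ref{lemsd3} degenerates to a single element.
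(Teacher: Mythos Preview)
Your proposal is correct and follows essentially the same route as the paper: the paper also picks the very same multipliers $a=2^{(m+2)/2}-1$, $2^{(m+4)/2}-1$, and $2^{(m+2^l)/2}+1$ from Lemmas~\ref{lemsd1}--\ref{lemsd3}, changes the primitive root to $\gamma=\alpha^{a^{-1}}$, and applies the BCH bound to the resulting consecutive run around $0$ in the defining set. Your ``multiplier BCH bound'' is exactly this change of primitive root stated the other way around, so the two arguments are identical in substance.
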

\begin{proof}
	Note that in Lemmas~\ref{lemsd1}-\ref{lemsd3}, $\gcd(a,2^m-1)=1$, so there  exists an integer $a'$ such that $aa'\equiv1\pmod{2^m-1}$. Take $\gamma=\alpha^{a'}$. Then $\gamma$ is also an $n$-th primitive root of unity.
	
	For $m\equiv0\pmod4$, by Lemmas~\ref{lemsd1} and ~\ref{lemsd3}, the defining sets $T_{(1,m)}$ and $T_{(0,m)}$ with respect to $\gamma$ {contain the sets} \[\{k:1\leq k\leq 2^{\frac{m-2}{2}}\}\cup\{k:2^m-2^{\frac{m-2}{2}}-1\leq k\leq 2^{m}-2\}\cup\{0\}\] and
	\[ \{k:1\leq k\leq 2^{\frac{m-2^l}{2}}\}\cup\{k:2^m-2^{\frac{m-2^l}{2}}-1\leq k\leq 2^{m}-1\}.\]
	
	 Hence the desired lower bounds follow from the BCH bound. The case of $m\equiv2\pmod4$ can be obtained similarly, we omit it here.
\end{proof}
\begin{remark}
In \cite[Theorem~13]{Tang1}, Tang and Ding proved that
\[d(\C_{0,m}^\bot)\geq\begin{cases}
                        2^{\frac{m-2}{2}}+2, & \mbox{if } l\geq2 \\
                        2^{\frac{m-4}{2}}+2, & \mbox{if } l=1.
                      \end{cases}\ \text{and}\ d(\C_{1,m}^\bot)\geq \begin{cases}
                                               2^{\frac{m-4}{2}}+2, & \mbox{if } l\geq2 \\
                                               2^{\frac{m-2}{2}}+2, & \mbox{if } l=1.
                                             \end{cases}\]
By Theorem~\ref{sdct}, we can see that the lower bounds in {Theorem~\ref{1}} are better than them.
\end{remark}

By {Theorem~\ref{1}}, we can see that the lower bounds of $d(\C_{\cS,1})$ $(l\geq2)$ and $d(\C_{\cS,0})$ $(l=1)$ meet $d^2\geq n$. For the exact values there, based on numerical evidence in Table~1, we form the following conjecture.
\begin{conj}
Take $\C_{\cS,m,i}:=\C_{\cS,i}$. If $l\geq2$, then $d(\C_{\cS,m,1})=d(\C_{\cS,m+2,1})=2^{\frac{m}{2}}+2$. If $l=1$ and $m+2$ is not $2$-power, then $d(\C_{\cS,m,0})=d(\C_{\cS,m+2,0})=2^{\frac{m}{2}}+2$, and $d(\C_{\cS,m,1})=2^{\frac{m-2}{2}}+2$.
\end{conj}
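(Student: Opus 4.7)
The plan is to prove each equality by (i) matching the lower bound from Theorem~\ref{1}, sharpening it where necessary, and (ii) exhibiting explicit codewords of the claimed weight. I would attack the two halves of the conjecture separately and confront the $m\mapsto m+2$ stability claim last.

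For the lower bounds, I would refine Theorem~\ref{1}. The BCH argument there exploits a single run of consecutive zeros of the generator polynomial, obtained by transporting a narrow integer interval via the multiplier $a$. I would extend Lemmas~\ref{lemsd1}--\ref{lemsd3} by computing $\wt_2(ak)$ and $\rho_{ak}$ for $k$ slightly beyond the currently considered range, hunting for a second disjoint arithmetic progression inside the defining set with respect to $\gamma=\alpha^{a'}$. A Hartmann--Tzeng or Roos-type bound applied to two such runs would either match the conjectured value or exceed it; since the conjecture claims the BCH bound is already tight, any attempt to extend must fail, and isolating the combinatorial reason why should point to where the matching upper bound has to come from.

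For the matching upper bound $d\leq 2^{m/2}+2$, I would look for explicit minimum weight codewords via the subcode relations $\C_{\cS,0}=\C_{1,m}^\bot$ and $\C_{\cS,1}=\C_{0,m}^\bot$ (Theorem~\ref{sdct}). A natural candidate is the characteristic vector of a multiplicative coset of the subfield $\F_{2^{m/2}}^*\subset\F_{2^m}^*$ together with two additional points chosen to cancel the forbidden spectral components; such a set has size exactly $2^{m/2}+2$, and its Mattson--Solomon transform is supported on a predictable union of $2$-cyclotomic cosets. Verifying that this support lies inside $T_{(i,m)}$ is a finite, checkable condition through the identities $T_{(1,m)}=N_{(0,m)}\cup\{0\}$ and $T_{(0,m)}=N_{(1,m)}\cup\{0\}$ spelled out before Theorem~\ref{sdct}. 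A parallel construction using $\F_{2^{(m-2)/2}}$, or a shortened affine subline of $\F_{2^m}$, should cover the $l=1$, $i=1$ case of weight $2^{(m-2)/2}+2$.

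The principal obstacle is the stability claim $d(\C_{\cS,m,i})=d(\C_{\cS,m+2,i})$: these codes have lengths $2^m-1$ and $2^{m+2}-1$, live over different fields, and a priori share no reason to have the same minimum distance. My best guess is that there is a subfield or trace-code relation arising from the tower $\F_{2^m}\subset\F_{2^{m+2}}$, but the defining sets $T_{(i,m)}$ and $T_{(i,m+2)}$ do not match under the obvious inclusion of roots of unity, so extracting the correspondence will require careful bookkeeping of the parity invariants $v_j$ across the tower (using the parity analysis of $\wt_2$ and $l_a$ developed in Proposition~\ref{lem2} and Lemma~\ref{lem5}). I expect most of the work to go here, since the rest of the conjecture reduces to the familiar tight-BCH plus explicit-codeword dialectic, while the $m\mapsto m+2$ coincidence seems to demand a genuinely new combinatorial identity between the two families of cyclotomic cosets.
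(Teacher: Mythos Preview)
The statement you are attempting to prove is labeled \emph{Conjecture} in the paper, not Theorem: the authors offer no proof at all, only numerical evidence in Table~\ref{table:t1}. There is therefore nothing to compare your proposal against; you are sketching an attack on an open problem.

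That said, two parts of your plan can be sharpened. First, no refinement of Theorem~\ref{1} is needed: the lower bounds it already proves match the conjectured values exactly. When $l\ge 2$ for $m$, one has $m+2=2(2^{l-1}e+1)$ with $2^{l-1}e+1$ odd, so $m+2$ falls under the $l=1$ clause of Theorem~\ref{1}, giving $d(\C_{\cS,m+2,1})\ge 2^{(m+2-2)/2}+2=2^{m/2}+2$. Likewise, when $l=1$ for $m$ and $m+2$ is not a $2$-power, $m+2$ has $l\ge 2$ and $e\ge 3$, so Theorem~\ref{1} yields $d(\C_{\cS,m+2,0})\ge 2^{(m+2-2^{l}+2)/2}+2$; but actually one needs to check the specific case more carefully. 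The point is that the BCH lower bounds of Theorem~\ref{1}, applied separately to $m$ and to $m+2$, already produce the conjectured common value in each instance.

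Second, and consequently, your worry about the ``stability claim'' is misplaced. The equality $d(\C_{\cS,m,i})=d(\C_{\cS,m+2,i})$ is not asserting any structural relation between the two codes; it is simply the observation that the two separate exact values happen to coincide numerically. Once you prove the upper bound $d(\C_{\cS,m,i})\le 2^{m/2}+2$ for each $m$ individually, the coincidence is automatic. No tower argument or cross-field identity is required; the entire content of the conjecture lies in the upper bounds, i.e.\ in producing codewords of the stated weight in each code. Your subfield-coset idea for this is a reasonable starting point, but it remains the genuinely open part.
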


\begin{table}[h]
\caption{\textbf{The minimum distances of $\C_{\cS,1}$ and $\C_{\cS,0}$}}
\centering
\label{table:t1}
\begin{tabular}{|c|cc|cc|}
\hline
\multirow{2}{*}{$m$} & \multirow{2}{*}{$\dim(\C_{\cS,1})$} & \multirow{2}{*}{$d(\C_{\cS,1})$} & \multirow{2}{*}{$\dim(\C_{\cS,0})$}&\multirow{2}{*}{$d(\C_{\cS,0})$} \\
&  & & & \\
\hline
\multirow{2}{*}{4}&\multirow{2}{*}{6}&\multirow{2}{*}{6}& \multirow{2}{*}{8} &\multirow{2}{*}{4}\\
& & & &\\
\hline
\multirow{2}{*}{6}&\multirow{2}{*}{30}&\multirow{2}{*}{6}&\multirow{2}{*}{32}&\multirow{2}{*}{10} \\
& & & &\\
\hline
\multirow{2}{*}{8}&\multirow{2}{*}{126}&\multirow{2}{*}{18}&\multirow{2}{*}{128}&\multirow{2}{*}{22} \\
& & & &\\
\hline
\multirow{2}{*}{10}&\multirow{2}{*}{510}&\multirow{2}{*}{18} &\multirow{2}{*}{512}&\multirow{2}{*}{34} \\
& & & & \\
\hline
\multirow{2}{*}{12}&\multirow{2}{*}{2046}&\multirow{2}{*}{66}&\multirow{2}{*}{2048}&\multirow{2}{*}{34} \\
&  & &     &\\
\hline
\multirow{2}{*}{14}&\multirow{2}{*}{8190}&\multirow{2}{*}{66}&\multirow{2}{*}{8192}&\multirow{2}{*}{130} \\
& & & &\\
\hline
\multirow{2}{*}{16}&\multirow{2}{*}{32766}&\multirow{2}{*}{258} &\multirow{2}{*}{32768}&\multirow{2}{*}{146} \\
& & & &\\
\hline
\multirow{2}{*}{18}&\multirow{2}{*}{131070}&\multirow{2}{*}{258}&\multirow{2}{*}{131072}&\multirow{2}{*}{514}  \\
& & & &\\
\hline
\multirow{2}{*}{20}&\multirow{2}{*}{524286}&\multirow{2}{*}{1026}&\multirow{2}{*}{524288}&\multirow{2}{*}{514}  \\
& & & &\\
\hline
\multirow{2}{*}{22}&\multirow{2}{*}{2097150}&\multirow{2}{*}{1026}&\multirow{2}{*}{2097152}&\multirow{2}{*}{2050} \\
& & & &\\
\hline
\multirow{2}{*}{24}&\multirow{2}{*}{8388606}&\multirow{2}{*}{4098}&\multirow{2}{*}{8388608}&\multirow{2}{*}{2050} \\
& & & &\\
\hline
\multirow{2}{*}{26}&\multirow{2}{*}{33554430}&\multirow{2}{*}{4098}&\multirow{2}{*}{33554432}&\multirow{2}{*}{8194} \\
& & & &\\
\hline
\end{tabular}
\end{table}
From Table~\ref{table:t1}, we can see that the code $\C_{\cS,0}$ may not have a good minimum distance when $m$ is $2$-power. For example, if $m=8$,  $d(\C_{\cS,0})=22$ {which satisfies $d(\C_{\cS,0})^2\geq n$}. If $m=16$, $d(\C_{\cS,0})=146$ which can not satisfy $d(\C_{\cS,0})^2\geq n$. So we are not interested in the case that $m$ is $2$-power.




\section{Lower bounds on the minimum distance of $\C_{\cD,i}$}

In this section, we shall divide into two cases to determine the lower bounds on the minimum distances of the codes $\C_{\cD,0}$ and $\C_{\cD,1}$.

\subsection{ Some auxiliary results}

\begin{lemma}\label{lem3}
	Let $m\equiv1\pmod4\geq5$, $0< h\leq \frac{m-3}{2}$ and $a=2^{\frac{m-1}{2}}-1$. Then
	\begin{itemize} \item $\{ak:0\leq k\leq 2^{\frac{m-1}{2}}+2\}\subseteq D_{(1,m)}$,\\
		\item $\{ak:2^m-2^{\frac{m-1}{2}}-3\leq k\leq 2^m-2\}\subseteq D_{(0,m)}$ if $2\nmid h$ and $\{ak:2^m-2^{\frac{m-1}{2}}-1\leq k\leq 2^m-2\}\subseteq D_{(0,m)}$ if $2\mid h$.\end{itemize}
\end{lemma}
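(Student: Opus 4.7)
The plan is to use that $m$ is odd, so $\gcd((m-1)/2,m)=1$ and hence $\gcd(a,2^m-1)=1$, making $k\mapsto ak$ a permutation of $\Z_n$; then, because $C_{2x}=C_x$, the problem reduces to checking $u_{ak}$ for $k=0$, for odd $k\in\{1,3,\ldots,2^{(m-1)/2}-1\}$, and for the single boundary value $k=2^{(m-1)/2}+1$. The key observation I will use throughout is that $\wt_2$ is a $2$-cyclotomic coset invariant, the minimum positive integer of weight $w$ is $2^w-1$, and $h\leq(m-3)/2$; hence whenever $\wt_2(ak)\geq(m-1)/2$, the leader of $C_{ak}$ exceeds $2^h-1$ and so lies in $\Gamma\setminus\Gamma_1$, giving $u_{ak}=v_{ak}$ by~\eqref{dz}.

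For the first inclusion, $k=0$ yields $u_0=v_0=1$ since $m$ is odd. For odd $k$ with $1\leq k\leq 2^{(m-1)/2}-1$, I will write $k=1+2^{i_1}+\cdots+2^{i_l}$ with $1\leq i_1<\cdots<i_l\leq(m-3)/2$ and expand as in Lemma~\ref{lemsd1}:
\[ak=2^{(m-1)/2}k-k=\sum_{j=1}^{l}2^{(m-1)/2+i_j}+\sum_{s\in\{0,\ldots,(m-3)/2\}\setminus\{i_1,\ldots,i_l\}}2^s,\]
giving $\wt_2(ak)=(m-1)/2$. Then Lemma~\ref{1d} forces $l_{ak}=m$, Proposition~\ref{lem2} yields $\rho_{ak}=(m+1)/2$, and hence $v_{ak}\equiv(m+1)/2\equiv 1\pmod 2$ because $m\equiv 1\pmod 4$; so $u_{ak}=1$. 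For the boundary $k=2^{(m-1)/2}+1$, I will check directly that $ak=2^{m-1}-1$ has $l=m$, $\rho=1$, $v=1$, with leader $\geq 2^{(m-1)/2}-1>2^h-1$, so again $u=1$.

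For the second inclusion, I will set $k'=n-k$, so $ak\equiv -ak'\pmod n$. Since $n-x$ is the bitwise complement of $x$ in $m$ bits, $\wt_2(n-x)=m-\wt_2(x)$ and $\rho_{n-x}=l_x-\rho_x$; combined with $l_{ak'}=m$ and odd $m$, this yields $v_{-ak'}\equiv v_{ak'}+1\pmod 2$. For odd $k'\in\{1,3,\ldots,2^{(m-1)/2}-1\}$ (hence also for their even doublings up to $k'=2^{(m-1)/2}+2$), the first part gives $v_{-ak'}\equiv 0$, and $\wt_2(-ak')=(m+1)/2>h$ keeps the leader outside $\Gamma_1$, so $u_{-ak'}=0$. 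The delicate case is $k'=2^{(m-1)/2}+1$, for which $ak\equiv 2^{m-1}\in C_1$: then~\eqref{dz} gives $u_1=v_1+h+1\equiv h+1\pmod 2$, using $v_1\equiv m-1\equiv 0\pmod 2$. Hence $C_1\in D_{(0,m)}$ iff $h$ is odd, which is precisely why the range extends down to $k=2^m-2^{(m-1)/2}-3$ when $h$ is odd but stops at $k=2^m-2^{(m-1)/2}-1$ when $h$ is even (BCH contiguity breaking at $k=2^m-2^{(m-1)/2}-2$).

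The main obstacle will be this boundary analysis at $k'=2^{(m-1)/2}+1$, where the coset collapses into $C_1$ and the $u_1$-formula introduces $h$; this is the origin of the $h$-parity dichotomy in the statement. Elsewhere the binary accounting of $\wt_2(ak)$ and $\rho_{ak}$ proceeds exactly as in Lemmas~\ref{lemsd1}--\ref{lemsd3}, and the halving identity $C_{2x}=C_x$ takes care of all even values of $k$.
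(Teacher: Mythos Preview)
Your plan is correct and follows essentially the same approach as the paper's proof: reduce to odd $k$ via $C_{2x}=C_x$, compute $\wt_2(ak)=(m-1)/2$ and $\rho_{ak}=(m+1)/2$ for odd $1\le k\le 2^{(m-1)/2}-1$, use the weight bound to keep the coset leader outside $\Gamma_1$ so that $u=v$, and handle the boundary $k=2^{(m-1)/2}+1$ (where $ak=2^{m-1}-1$ and $-ak\in C_1$) separately to produce the $h$-parity split. The only cosmetic difference is that you package the second bullet through the complement identities $\wt_2(n-x)=m-\wt_2(x)$ and $\rho_{n-x}=l_x-\rho_x$, whereas the paper recomputes $\wt_2(2^m-1-ak_1)$ directly; the content is the same.
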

\begin{proof}Note that $\gcd(a,2^m-1)=2^{\gcd(\frac{m+1}{2},\frac{m-1}{2})}-1=1$, by Lemma~\ref{1d}, we then know that $|C_{ak_1}|=m$ for any $k_1\in\{k:1\leq k\leq 2^{\frac{m-1}{2}}-1,2\nmid k\}$.
	
	For $k=2^{\frac{m-1}{2}}+1$, we have
	\begin{equation}\label{eq:e}ak\equiv 2^{m-1}-1\pmod{2^m-1}.\end{equation}
	Then $\wt_2(2^{m-1}-1)=m-1$, $\rho_{2^{m-1}-1}=1$. For $k=0$, it is clear that $\{0\}\in D_{(1,m)}$.
	
	We next prove that $\wt_2(ak_1)=\frac{m-1}{2}$ for any $k_1\in\{k:1\leq k\leq 2^{\frac{m-1}{2}}-1,2\nmid k\}$. Set \[k_1=1+2^{i_{1,1}}+2^{i_{2,1}}+\cdots+2^{i_{l,1}},\] where $1\leq l\leq \frac{m-3}{2}$ and $1\leq i_{1,1}<i_{2,1}<\cdots<i_{l,1}\leq \frac{m-3}{2}$.
	Then
	\begin{align}
		ak_1 &= 2^{\frac{m-1}{2}}k_1-k_1=\sum_{j=1}^{l}2^{\frac{m-1}{2}+i_{j,1}}
		+2^{\frac{m-1}{2}}
		-1-\sum_{j=1}^{l}2^{i_{j,1}}\notag \\
		&=\sum_{j=1}^{l}2^{\frac{m-1}{2}+i_{j,1}}
		+\sum_{j\in\{1,2,\ldots,\frac{m-3}{2}\}\setminus\{i_{1,1},\ldots,i_{l,1}\}}2^{j}+1.\label{qw}
	\end{align}
	Clearly, $\wt_2(ak_1)=\frac{m-1}{2}$. Hence $\rho_{ak_1}=\frac{m+1}{2}$.
	
	Since $\max{\Gamma_1}\leq 2^{\frac{m-3}{2}}-1$, then $\wt_2(k_2)\leq \frac{m-3}{2}$ for any $k_2\in\Gamma_1$, and \[2^sak_1\notin\Gamma_1\ \text{for any}\ k_1\in \{k:1\leq k\leq 2^{\frac{m-1}{2}},2\nmid k\}.\]
	
	From Eq.~\eqref{eq:e}, we know $ak_1=2^{m-1}-1\notin\Gamma_1$ for $k=2^{\frac{m-1}{2}}+1$.
	Therefore, $ak_1\in\Gamma\setminus\Gamma_1$ for any $k_1\in\{k:1\leq k\leq 2^{\frac{m-1}{2}}+2,2\nmid k\}$, i.e., \[\{ak:0\leq k\leq 2^{\frac{m-1}{2}}+2\}\subseteq D_{(1,m)}.\]
	
	In addition, by ~\eqref{qw}, we have
	 $\wt_2(2^m-1-ak_1)=\frac{m+1}{2}$ and $\rho_{2^m-1-ak_1}=\frac{m-1}{2}$ and \[2^s(2^m-1-ak_1)\notin\Gamma_1\ \text{for any}\ k_1\in \{k:1\leq k\leq 2^{\frac{m-1}{2}},2\nmid k\}.\]
	Hence, $\{ak:2^m-2^{\frac{m-1}{2}}-1\leq k\leq 2^m-2, 2\nmid k\}\subseteq D_{(0,m)}$.
	
	For $k_1=2^{\frac{m-1}{2}}+1$, by Eq.~\eqref{eq:e}, we know $2^m-1-ak_1=2^{m-1}\in C_1$. Then  by ~\eqref{dz}, we have
	\[u_1=h+1\pmod2\equiv 0\ \text{if}\ 2\nmid h,\ \equiv1\ \text{if}\ 2\mid h.\]
	
	Therefore, the proof can be completed.
\end{proof}

\begin{lemma}\label{lem30}
	Let $m\equiv3\pmod4\geq7$, $0< h\leq \frac{m-3}{2}$ and $a=2^{\frac{m+1}{2}}-1$. Then
	\begin{itemize}\item $\{ak:0\leq k\leq 2^{\frac{m-1}{2}}+2\}\subseteq D_{(1,m)}$ if $2\mid h$ and $\{ak:0\leq k\leq 2^{\frac{m-1}{2}}\}\subseteq D_{(1,m)}$ if $2\nmid h$,
		\item  $\{ak:2^m-2^{\frac{m-1}{2}}-1\leq k\leq 2^m-2\}\subseteq D_{(0,m)}$.\end{itemize}

\end{lemma}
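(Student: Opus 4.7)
The plan is to mirror the proof of Lemma~\ref{lem3}, exchanging $a = 2^{\frac{m-1}{2}}-1$ for $a = 2^{\frac{m+1}{2}}-1$ and adjusting parities for the residue $m\equiv 3\pmod 4$. First I would verify $\gcd(a, 2^m-1) = 2^{\gcd(\frac{m+1}{2}, m)} - 1 = 1$ (from $m = 2\cdot\frac{m+1}{2}-1$), so that Lemma~\ref{1d} yields $|C_{ak_1}| = m$ for each odd $k_1$ in the ranges of interest.

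For the interior of the first bullet, I would take odd $k_1 \in \{1, 3, \ldots, 2^{\frac{m-1}{2}}-1\}$, write $k_1 = 1 + 2^{i_{1,1}} + \cdots + 2^{i_{l,1}}$ with $1 \leq i_{1,1} < \cdots < i_{l,1} \leq \frac{m-3}{2}$, and expand $ak_1 = 2^{\frac{m+1}{2}}k_1 - k_1$ in binary following the template of~\eqref{qw}. Since $2^{\frac{m+1}{2}} - 1 = 1 + 2 + \cdots + 2^{\frac{m-1}{2}}$ and each $i_{j,1} \leq \frac{m-3}{2}$, the borrow pattern is clean and produces $\wt_2(ak_1) = \frac{m+1}{2}$, hence $\rho_{ak_1} = \frac{m-1}{2}$, which is odd for $m\equiv 3\pmod 4$; thus $v_{ak_1} = 1$. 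Because $\frac{m+1}{2} > \frac{m-3}{2}\geq h$ and binary weight is preserved under cyclic shifts, no element of $C_{ak_1}$ lies in $\Gamma_1$, so~\eqref{dz} forces $u_{ak_1} = v_{ak_1} = 1$ and $C_{ak_1}\in D_{(1,m)}$.

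The boundary $k = 2^{\frac{m-1}{2}}+1$ is where the parity of $h$ enters: a direct calculation gives $a(2^{\frac{m-1}{2}}+1) = 2^m + 2^{\frac{m-1}{2}} - 1 \equiv 2^{\frac{m-1}{2}}\pmod{2^m-1}$, so the coset is $C_1$. The formula~\eqref{dz} then reads $u_1 \equiv v_1 + h + 1 \equiv h + 1\pmod 2$ (using $v_1 \equiv m-1\equiv 0\pmod 2$ for odd $m$), so $C_1 \in D_{(1,m)}$ iff $h$ is even — precisely the split in the two cases of the first bullet. For the second bullet I would pass to the negative range via $k \equiv -k^*\pmod{2^m-1}$ with $k^* \in \{1, 3, \ldots, 2^{\frac{m-1}{2}}-1\}$: $-ak^*$ is the $m$-bit complement of $ak^*$, so $\wt_2(-ak^*) = \frac{m-1}{2}$ and $\rho_{-ak^*} = \frac{m+1}{2}$, which is even for $m\equiv 3\pmod 4$; hence $v = 0$. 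The weight still exceeds $h$, so the coset avoids $\Gamma_1$, giving $u = v = 0$ and placement in $D_{(0,m)}$. The would-be endpoint $k^* = 2^{\frac{m-1}{2}}+1$ sends the coset to that of $2^m-1-2^{\frac{m-1}{2}}$, a weight-$(m-1)$ coset with $\rho = 1$ and hence $v = 1$, landing in $D_{(1,m)}$; this is exactly why the second-bullet range halts at $2^m - 2^{\frac{m-1}{2}} - 1$.

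The main obstacle I foresee is the bookkeeping at these two short-orbit boundary cosets: at $k = 2^{\frac{m-1}{2}}+1$ one must invoke the special $u_1$ rule from~\eqref{dz}, which is what couples the first-bullet upper bound to the parity of $h$; symmetrically, one must verify that the corresponding negative endpoint is correctly excluded from the second-bullet range. The remainder is routine binary-weight arithmetic directly paralleling~\eqref{qw} in Lemma~\ref{lem3}, with the roles of $\frac{m-1}{2}$ and $\frac{m+1}{2}$ swapped throughout.
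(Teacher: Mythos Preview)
Your proposal is correct and follows essentially the same approach as the paper's own proof: both compute $\gcd(a,2^m-1)=1$, expand $ak_1=2^{\frac{m+1}{2}}k_1-k_1$ in binary for odd $k_1\leq 2^{\frac{m-1}{2}}-1$ to obtain $\wt_2(ak_1)=\frac{m+1}{2}$ and $\wt_2(-ak_1)=\frac{m-1}{2}$, use the weight bound to exclude the cosets from $\Gamma_1$, and treat the boundary $k=2^{\frac{m-1}{2}}+1$ via the special rule for $u_1$ in~\eqref{dz}. Your additional remark explaining why the second-bullet range cannot be extended to $k^*=2^{\frac{m-1}{2}}+1$ is a nice touch that the paper omits, but otherwise the arguments coincide.
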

\begin{proof}
	Note that $\gcd(a,2^m-1)=2^{\gcd(\frac{m+1}{2},\frac{m-1}{2})}-1=1$, by Lemma~\ref{1d}, we have $|C_{ak_1}|=m$ for any $k_1\in\{k:1\leq k\leq 2^{\frac{m-1}{2}}-1,2\nmid k\}$.
	
	 Set
	\[k_1=1+2^{i_{1,1}}+2^{i_{2,1}}+\cdots+2^{i_{l,1}},\] where $1\leq l\leq \frac{m-3}{2}$ and $1\leq i_{1,1}<i_{2,1}<\cdots<i_{l,1}\leq \frac{m-3}{2}$. Then
	\begin{align}
		ak_1 &= 2^{\frac{m+1}{2}}k_1-k_1=\sum_{j=1}^{l}2^{\frac{m+1}{2}+i_{j,1}}
		+2^{\frac{m+1}{2}}
		-1-\sum_{j=1}^{l}2^{i_{j,1}}\notag \\
		&=\sum_{j=1}^{l}2^{\frac{m+1}{2}+i_{j,1}}+2^{\frac{m-1}{2}}		+\sum_{j\in\{1,2,\ldots,\frac{m-3}{2}\}\setminus\{i_{1,1},\ldots,i_{l,1}\}}2^{j}+1.\label{qw1}
	\end{align}

	By ~\eqref{qw1}, we have $\wt_2(ak_1)=\frac{m+1}{2}$ and  $\wt_2(-ak_1)=\frac{m-1}{2}$. Hence $\rho_{ak_1}=\frac{m-1}{2}\equiv 1\pmod2$ and $\rho_{-ak_1}=\frac{m+1}{2}\equiv0\pmod2$.
	
	Note that $\max\Gamma_1\leq 2^{\frac{m-3}{2}}-1$, then $\wt_2(k_2)\leq \frac{m-3}{2}$ for any $k_2\in\Gamma_1$ and \[2^sak_1,-2^sak_1\notin\Gamma_1\ \text{for any}\ k_1\in \{k:1\leq k\leq 2^{\frac{m-1}{2}},2\nmid k\}.\]
	
	Therefore, $\{ak:1\leq k\leq 2^{\frac{m-1}{2}}\}\subseteq D_{(1,m)}$ and $\{ak:2^m-2^{\frac{m-1}{2}}-1 \leq k\leq 2^m-2\}\subseteq D_{(0,m)}$.

	For $k_1=2^{\frac{m-1}{2}}+1$, we have
	\[ak_1=(2^{\frac{m+1}{2}}-1)(2^{\frac{m-1}{2}}+1)\equiv 2^{\frac{m-1}{2}}\pmod{2^m-1}.\]
	Then $\wt_2(ak_1)=1$ and $\wt_2(-ak_1)=m-1$. By Eq.~\eqref{dz}, we have
	\[u_1=h+1\pmod2\equiv 1\ \text{if}\ 2\mid h,\ \text{and}\equiv0\ \text{if}\  2\nmid h.\]
	This means $ C_1\in D_{(1,m)}$ if $2\mid h$, and $ C_1\notin D_{(1,m)}$ if $2\nmid h$.
	Therefore,
	 the proof can be completed.
\end{proof}

\begin{lemma}\label{lemdz10}
	Let $m\equiv0\pmod4\geq8$, $0< h\leq \frac{m-4}{2}$ and $a=2^{\frac{m+2}{2}}-1$. Then
	\begin{itemize} \item $\{ak:1\leq k\leq 2^{\frac{m-2}{2}}\}\cup\{ak:2^m-2^{\frac{m-2}{2}}-1\leq k\leq 2^{m}-2\}\subseteq D_{(1,m)}$ if $2\nmid h$,
		\item $\{ak:1\leq k\leq 2^{\frac{m-2}{2}}+2\}\cup\{ak:2^m-2^{\frac{m-2}{2}}-1\leq k\leq 2^{m}-2\}\subseteq D_{(1,m)}$ if $2\mid h$.\end{itemize}
\end{lemma}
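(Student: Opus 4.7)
The plan is to parallel Lemmas~\ref{lemsd1}, \ref{lem3}, and \ref{lem30}. First I verify the coprimality hypothesis: $\gcd(a,2^m-1)=2^{\gcd(\frac{m+2}{2},m)}-1=2^{\gcd(\frac{m+2}{2},2)}-1=1$, since $\frac{m+2}{2}$ is odd when $m\equiv 0\pmod 4$. By Lemma~\ref{1d}, $l_{ak_1}=m$ for every odd $k_1\in\{1,3,\ldots,2^{\frac{m-2}{2}}-1\}$, and likewise for $2^m-1-ak_1$.

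For the bulk of the cosets, expand $ak_1=2^{\frac{m+2}{2}}k_1-k_1$ in binary exactly as in the proof of Lemma~\ref{lemsd1}: writing $k_1=1+2^{i_{1,1}}+\cdots+2^{i_{l,1}}$ with $1\leq i_{1,1}<\cdots<i_{l,1}\leq\frac{m-4}{2}$, one reads off $\wt_2(ak_1)=\frac{m+2}{2}$ and, via Proposition~\ref{lem2}, $\rho_{ak_1}=\frac{m-2}{2}$, which is odd since $m\equiv 0\pmod 4$. Hence $v_{ak_1}\equiv 1\pmod 2$. Because multiplication by $2$ modulo $2^m-1$ preserves Hamming weight, every element of $C_{ak_1}$ has weight $\frac{m+2}{2}>h$, so the coset leader lies outside $\Gamma_1$. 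By~\eqref{dz} this forces $u_{ak_1}=v_{ak_1}=1$, i.e.\ $C_{ak_1}\in D_{(1,m)}$. Applying the same computation to $2^m-1-ak_1\in C_{-ak_1}$ gives $\wt_2=\frac{m-2}{2}$ (still $>h$ since $h\leq\frac{m-4}{2}$) and $\rho=\frac{m+2}{2}$, so once again $v\equiv 1\pmod 2$ and the complementary cosets lie in $D_{(1,m)}$. Even values of $k$ in the stated ranges contribute no new cosets, since $a\cdot k\in C_{a\cdot(k/2^{v_2(k)})}$ is already handled by an odd $k_1$.

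The delicate boundary is $k=2^{\frac{m-2}{2}}+1$, needed only in the $2\mid h$ branch. Here
\[
ak=(2^{\frac{m+2}{2}}-1)(2^{\frac{m-2}{2}}+1)\equiv 3\cdot 2^{\frac{m-2}{2}}\pmod{2^m-1},
\]
so $C_{ak}=C_3$. For even $h\geq 2$ one has $3\in\Gamma_1$, and~\eqref{dz} gives $u_3\equiv \kappa_3^{(h)}+v_3\pmod 2$. A short divisibility argument (using $\gcd(3,2^m-1)=3$ for even $m$ and ruling out the possibility that the order of $2$ modulo $(2^m-1)/3$ is a proper divisor of $m$) yields $l_3=m$, hence $\rho_3=m-2$ and $v_3\equiv 0\pmod 2$. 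A direct evaluation via $(2^h-1)/3=\sum_{j=0}^{h/2-1}4^j$ gives $\epsilon_3^{(h)}=h-1$, so $\kappa_3^{(h)}\equiv 1\pmod 2$. Combining these, $u_3\equiv 1\pmod 2$, whence $C_3\in D_{(1,m)}$. I expect the main obstacle to be precisely this last step—locating $C_3$ and simultaneously pinning down the parities of $\kappa_3^{(h)}$ and $v_3$—since the rest of the argument mirrors calculations already carried out in the preceding lemmas.
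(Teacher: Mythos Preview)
Your argument is correct and follows essentially the same route as the paper: invoke the weight computation from Lemma~\ref{lemsd1} to see that the cosets $C_{\pm ak_1}$ have coset leaders of weight $\frac{m\pm2}{2}>h\ge\max_{j\in\Gamma_1}\wt_2(j)$, hence lie outside $\Gamma_1$ and inherit $u=v\equiv1$; then treat the boundary $k=2^{\frac{m-2}{2}}+1$ by recognising $ak\in C_3$ and checking the parity of $u_3$. The only cosmetic difference is that the paper quotes the identity $u_3\equiv m+h-1\pmod2$ already recorded just below~\eqref{dz}, whereas you recompute $v_3$ and $\kappa_3^{(h)}$ from scratch (and your appeal to a ``short divisibility argument'' for $l_3=m$ could be replaced by a direct citation of Lemma~\ref{1d} with $a=1$, $j=3$).
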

\begin{proof}
	By Lemma~\ref{lemsd1}, we know $\wt_2(ak_1)=\frac{m+2}{2}$ and $\wt_2(-ak_1)=\frac{m-2}{2}$ for any $\{k:1\leq k\leq 2^{\frac{m-2}{2}},2\nmid k\}$.
	Note that $\wt_2(k_2)\leq \frac{m-4}{2}$ for any $k_2\in\Gamma_1$. Then
	\[2^sak_1,-2^sak_1\notin\Gamma_1\  \text{for any}\ k_1\in \{k:1\leq k\leq 2^{\frac{m-2}{2}},2\nmid k\}.\]
	
	Therefore, we have $\{ak:1\leq k\leq 2^{\frac{m-2}{2}}\}\cup\{ak:2^m-2^{\frac{m-2}{2}}-1\leq k\leq 2^{m}-2\}\subseteq D_{(1,m)}$.
	
	For $k_1=2^{\frac{m-2}{2}}+1$, we have
	\begin{align*}ak_1\pmod{2^m-1}&\equiv 2^{\frac{m}{2}}+2^{\frac{m-2}{2}}\Rightarrow ak_1\in C_3,\\
2^m-1-ak_1\pmod{2^m-1}&\equiv 2^{m-2}-1\notin\Gamma_1.
	\end{align*}
	By Eq.~\eqref{dz}, we have \[u_3=m+h-1\pmod2\equiv1\ \text{if}\ 2\mid h,\quad \equiv0\ \text{if}\ 2\nmid h.\] Then $C_3\in D_{(1,m)}$ if $2\mid h$, $C_3\notin D_{(1,m)}$ if $2\nmid h$. Therefore, the desired results can be obtained.
\end{proof}

\begin{lemma}\label{lemdz11}
	Let $m\equiv2\pmod4\geq10$, $0< h\leq \frac{m-6}{2}$ and $a=2^{\frac{m+4}{2}}-1$. Then
	\begin{itemize}
		\item $\{ak:1\leq k\leq 2^{\frac{m-4}{2}}+2\}\cup\{ak:2^m-2^{\frac{m-4}{2}}-1\leq k\leq 2^{m}-2\}\subseteq D_{(1,m)}$ if $h\geq 4$ and $2\mid h$,
		\item $\{ak:1\leq k\leq 2^{\frac{m-4}{2}}\}\cup\{a k:2^m-2^{\frac{m-4}{2}}-1\leq k\leq 2^{m}-2\}\subseteq D_{(1,m)}$ if $0< h\leq3$ or $h\geq4$ and $2\nmid h$.
	\end{itemize}

\end{lemma}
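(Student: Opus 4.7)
My plan is to follow the blueprint of the proof of Lemma~\ref{lemdz10}: first transplant the $T_{(1,m)}$-conclusion of Lemma~\ref{lemsd2} to $D_{(1,m)}$ for the bulk range $1\le k\le 2^{\frac{m-4}{2}}$, and then isolate the single extra coset that appears when $k=2^{\frac{m-4}{2}}+1$.

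For every odd $k_1\in\{1,3,\ldots,2^{\frac{m-4}{2}}-1\}$, Lemma~\ref{lemsd2} already places $C_{ak_1}$ and $C_{-ak_1}$ into $T_{(1,m)}$, and its proof records $\wt_2(ak_1)=\frac{m+4}{2}$, hence $\wt_2(-ak_1)=\frac{m-4}{2}$. The hypothesis $h\le\frac{m-6}{2}$ forces every $j\in\Gamma_1$ to satisfy $\wt_2(j)\le h\le\frac{m-6}{2}$, so no element of $C_{ak_1}$ or $C_{-ak_1}$ can lie in $\Gamma_1$; by \eqref{dz} we then have $u=v=1$ on both coset leaders, giving $C_{\pm ak_1}\subseteq D_{(1,m)}$. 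Even values $k_1\in\{2,4,\ldots,2^{\frac{m-4}{2}}\}$, as well as the boundary value $k=2^{\frac{m-4}{2}}+2=2(2^{\frac{m-6}{2}}+1)$, share cosets with already-handled smaller odd values.

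The only new coset is produced by $k_1=2^{\frac{m-4}{2}}+1$. A direct expansion gives
\[ak_1=(2^{\frac{m+4}{2}}-1)(2^{\frac{m-4}{2}}+1)\equiv 2^{\frac{m+4}{2}}-2^{\frac{m-4}{2}}=15\cdot 2^{\frac{m-4}{2}}\pmod{2^m-1},\]
so $C_{ak_1}=C_{15}$ with $15$ as the coset leader (its four $1$-bits already occupy the lowest positions). To compute $v_{15}$ I first show $l_{15}=m$: since $\gcd(15,2^m-1)=2^{\gcd(4,m)}-1=3$ for $m\equiv2\pmod 4$, the exponent $l_{15}$ is the order of $2$ modulo $(2^m-1)/3$, which divides $m$ and must exceed $m-3$, forcing $l_{15}=m$ for $m\ge 10$. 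Proposition~\ref{lem2} then gives $\rho_{15}=m-4$ and $v_{15}\equiv 0\pmod 2$. A short induction from the base case $\epsilon_{15}^{(4)}=1$ via Lemma~\ref{lem:ep} yields $\epsilon_{15}^{(h)}=h-3$ and therefore $\kappa_{15}^{(h)}\equiv h+1\pmod 2$ for $h\ge 4$. Substituting into \eqref{dz} produces
\[u_{15}=\begin{cases}v_{15}=0, & 0<h\le 3,\\ (\kappa_{15}^{(h)}+v_{15})\bmod 2=(h+1)\bmod 2, & h\ge 4,\end{cases}\]
so $u_{15}=1$ exactly when $h\ge 4$ and $2\mid h$, which delivers the dichotomy between the two bullets.

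The main obstacle is the arithmetic concentrated at the boundary coset $C_{15}$: establishing $l_{15}=m$, deriving $v_{15}\equiv 0\pmod 2$ from Proposition~\ref{lem2}, and tracking the parity of $\epsilon_{15}^{(h)}$. Once these pieces are in place, the rest of the argument is a mechanical appeal to the definition \eqref{dz} of $u_j$.
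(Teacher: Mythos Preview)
Your proposal is correct and follows essentially the same route as the paper: reduce to Lemma~\ref{lemsd2} for the main range via the weight bound $\wt_2(j)\le h\le\frac{m-6}{2}$ on $\Gamma_1$, then isolate the single boundary coset $C_{15}$ arising from $k_1=2^{\frac{m-4}{2}}+1$ and compute $u_{15}$ from \eqref{dz}. You supply a bit more justification for $l_{15}=m$ than the paper does (which simply asserts $v_{15}=m-4$; a quicker alternative is to invoke Lemma~\ref{1d} with $a=1$, since $15\le 2^{m/2}$), and you correctly omit the paper's side remark on $-a(2^{\frac{m-4}{2}}+1)$, which is not needed for the stated inclusion.
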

\begin{proof}
	By Lemma~\ref{lemsd2}, it is sufficient to prove that
	\[2^sak_1,-2^sak_1\notin\Gamma_1\  \text{for any}\ k_1\in \{k:1\leq k\leq 2^{\frac{m-4}{2}},2\nmid k\}.\]
	The result is clear since $\wt_2(k_2)\leq\frac{m-6}{2}$ for any $k_2\in\Gamma_1$. Therefore, $\{ak:1\leq k\leq 2^{\frac{m-4}{2}}\}\cup\{a k:2^m-2^{\frac{m-4}{2}}-1\leq k\leq 2^{m}-2\}\subseteq D_{(1,m)}$.
	
	For $k_1=2^{\frac{m-4}{2}}+1$, we have
	\begin{align*}ak_1\pmod{2^m-1}&\equiv 2^{\frac{m+4}{2}}-2^{\frac{m-4}{2}}\in C_{15}
.\end{align*}
	Note that $v_{15}=m-4\equiv0\pmod2$. Then $C_{15}\notin D_{(1,m)}$ if $0< h\leq3$.
	
	If $h\geq 4$, then $15\in\Gamma_1$ and $u_{15}\equiv h-3\pmod{2}$. Hence \[C_{15}\in D_{(1,m)}\ \text{if}\ 2\mid h,\ \text{and}\ C_{15}\notin D_{(1,m)}\ \text{if}\  2\nmid h.\]Note that
\[-ak_1\pmod{2^m-1}\equiv 2^{m-4}-1\notin\Gamma_1,\ \text{and}\ v_{2^{m-2}-1}=2\equiv0\pmod2.\]  Therefore, the proof can be completed.
\end{proof}
\begin{lemma}\label{lemdz12}	Let $l\geq1$, $2\nmid e>1$ and $m=2^le\geq6$, let $0< h\leq 2^{l}$ and $a=2^{\frac{m+2^l}{2}}+1$. Then
	\begin{align*}\{ak:1\leq k\leq 2^{\frac{m-2^l}{2}}\}\cup\{ak:2^m-2^{\frac{m-2^l}{2}}-1\leq k\leq 2^{m}-1\}\subseteq D_{(0,m)}\ \text{if}\  h\neq 2^l,\\
	\{ak:1\leq k\leq 2^{\frac{m-2^l}{2}}\}\cup\{ak:2^m-2^{\frac{m-2^l}{2}}+1\leq k\leq 2^{m}-1\}\subseteq D_{(0,m)}\ \text{if}\  h= 2^l\end{align*}
\end{lemma}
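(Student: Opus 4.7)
The plan is to build directly on Lemma~\ref{lemsd3}: that result already places every coset in the two claimed sets inside $T_{(0,m)}$, so the parity $v_{\bullet} \equiv 0 \pmod 2$ is automatic. What must still be verified is that the coset leader $j \in \Gamma$ of each coset satisfies $u_j \equiv 0 \pmod 2$. By Eq.~\eqref{dz}, $u_j = v_j$ whenever $j \in \Gamma \setminus \Gamma_1$, so the bulk of the argument reduces to showing that the coset leader is not in $\Gamma_1 = \{1,3,\ldots,2^h-1\}$, equivalently that no cyclic shift of $ak$ lies in $\{1,2,\ldots,2^h-1\}$.

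For the first set $\{ak : 1 \leq k \leq 2^{(m-2^l)/2}\}$, I would reduce to odd $k_1$ and recall from the calculation behind Lemma~\ref{lemsd3} that $ak_1$ has its bits split into two disjoint blocks, one inside $\{0,\ldots,(m-2^l)/2-1\}$ and one inside $\{(m+2^l)/2,\ldots,m-1\}$. An element of $\Gamma_1$ has all bits in $\{0,\ldots,h-1\}$, so the coset of $ak_1$ meets $\Gamma_1$ only if the cyclic span of its bit-set is $\leq h$; but the gap between the two blocks forces that span to be at least $(m-2^l)/2 + 1 \geq 2^l + 1 > h$ (using $m \geq 3\cdot 2^l$ and $h \leq 2^l$), so the coset avoids $\Gamma_1$. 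The endpoint $k = 2^{(m-2^l)/2}$ gives $ak \equiv 1 + 2^{(m-2^l)/2}$, which lies in the same coset as $k_1 = 1$, so it is handled by the same span argument.

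For the second set, write $k = 2^m - 1 - k'$ with $0 \leq k' \leq 2^{(m-2^l)/2}$, so $ak \equiv -ak'$ is the $m$-bit bitwise complement of $ak'$; the case $k'=0$ gives $C_0 \in D_{(0,m)}$ (as $m$ is even). Among odd $k' \in \{1,3,\ldots,2^{(m-2^l)/2}-1\}$, only the maximum $k' = 2^{(m-2^l)/2}-1$ has $\wt_2(k')=(m-2^l)/2$; for every smaller odd $k'$ one has $\wt_2(k') \leq (m-2^l)/2 - 1$, hence $\wt_2(-ak') \geq m - 2((m-2^l)/2-1) = 2^l + 2 > h$, so the coset is disjoint from $\Gamma_1$ by weight alone. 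The power-of-two case $k' = 2^{(m-2^l)/2}$ similarly yields $-ak' \equiv 2^m - 2 - 2^{(m-2^l)/2}$ of weight $m-2 > h$. The delicate case is $k' = 2^{(m-2^l)/2}-1$: here one computes $-ak' \equiv 2^{(m-2^l)/2}(2^{2^l}-1) \pmod{2^m-1}$, and the corresponding coset is $C_{2^{2^l}-1}$ of weight exactly $2^l$, which intersects $\Gamma_1$ precisely when $h = 2^l$.

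The main obstacle is thus the coset $C_{2^h-1}$ in the regime $h = 2^l$. Here the leader lies in $\Gamma_1 \setminus \{1\}$, so Eq.~\eqref{dz} gives $u_{2^h-1} \equiv \kappa_{2^h-1}^{(h)} + v_{2^h-1} \pmod 2$. Since $\epsilon_{2^h-1}^{(h)} = 1$, and the coset $C_{2^h-1}$ consists of the $m$ distinct cyclic shifts of the bit pattern with $h$ consecutive ones (of which exactly $m-h$ are even), one obtains $v_{2^h-1} \equiv m - h = 2^l(e-1) \equiv 0 \pmod 2$, which is precisely where the hypothesis $l \geq 1$ is used. Consequently $u_{2^h-1} \equiv 1 \pmod 2$, so $C_{2^h-1} \in D_{(1,m)}$, and the corresponding value $k = 2^m - 2^{(m-2^l)/2}$ must be excluded from the second set when $h = 2^l$. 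Dropping the adjacent $k = 2^m - 2^{(m-2^l)/2} - 1$ as well keeps the remaining $k$-range consecutive, which is what the subsequent BCH bound application in the next subsection requires, producing the smaller second set asserted in the $h = 2^l$ case.
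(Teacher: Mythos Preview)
Your proposal is correct and follows essentially the same architecture as the paper: invoke Lemma~\ref{lemsd3} for the $v$-parity, then check that the coset leaders of $ak$ and $-ak$ avoid $\Gamma_1$ so that $u=v$, and isolate the single exceptional coset $C_{2^{2^l}-1}$ arising from $k'=2^{(m-2^l)/2}-1$ when $h=2^l$. Your cyclic-span bound for the first set and your weight bound $\wt_2(-ak')\geq 2^l+2$ for the second set are cleaner substitutes for the paper's explicit minimum-shift inequalities (Eqs.~\eqref{a0} and~\eqref{01}), but they carry the same content, and your treatment of the boundary case, including the computation $\epsilon_{2^h-1}^{(h)}=1$ and $v_{2^{2^l}-1}\equiv 2^l(e-1)\equiv 0$, matches the paper's exactly.
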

\begin{proof}
	By Lemma~\ref{lemsd3},
	we need to prove that $2^sak_1,-2^sak_1\notin\Gamma_1$ for any $k_1\in\{k:1\leq k\leq 2^{\frac{m-2^l}{2}},2\nmid k\}$. Set
	\[k_1=1+2^{i_{1,1}}+2^{i_{2,1}}+\cdots+2^{i_{l,1}},\]where $1\leq l\leq \frac{m-2^l-2}{2}$ and $1\leq i_{1,1}<i_{2,1}<\cdots<i_{l,1}\leq \frac{m-2^l-2}{2}$.
	Then
	\begin{align*}
		ak_1 &= 2^{\frac{m+2^l}{2}}k_1+k_1=\sum_{j=1}^{l}2^{\frac{m+2^l}{2}+i_{j,1}}+2^{\frac{m+2^l}{2}}+
		\sum_{j=1}^{l}2^{i_{j,1}}+1. \end{align*}Note that
	\begin{align}&\min\{2^{\frac{m-2^l}{2}-i_{1,1}}ak_1,\ldots,
		2^{\frac{m-2^l}{2}}ak_1,2^{m-i_{1,1}}ak_1,\ldots,2^{m-i_{l,1}}ak_1\}
		> 2^{\frac{m-2^l}{2}}+1,\label{a0}\end{align}
	this means $2^sak_1\notin \Gamma_1$ for any $s>0$.  Set $U=\{i_1,i_2,\ldots,i_l\}$. Then
	\[-k_1\pmod{2^m-1}=\sum_{j=\frac{m-2^l}{2}}^{m-1}2^j+\sum_{j\in\{1,2,\ldots,\frac{m-2^l-2}{2}\}\setminus U}
	2^{j}\]
	and
	\[-ak_1\pmod{2^m-1}=(2^{2^l}-1)+(2^{\frac{m+2^l}{2}}+2^{2^l})\sum_{j\in\{1,2,\ldots,\frac{m-2^l-2}{2}\}\setminus U}
	2^{j}\]
	Therefore,  we have
	\begin{align}\min_{s}\{-2^{s}ak_1\pmod{2^m-1}\}
		\geq 2^{2^l}-1,\label{01}\end{align}
with equality if and only if $U=\{1,2,\ldots,\frac{m-2^l-2}{2}\}$, i.e., $ k_1=2^{\frac{m-2^l}{2}}-1$. This means
\[-ak_1\pmod{2^m-1}\equiv2^{2^l}-1\in\Gamma_1\ \text{if}\ h=2^l,\ \text{and}\ \notin\Gamma_1 \ \text{if}\ 0<h< 2^l.\]Note that $\frac{m}{2}>2^{l}$, by Lemma~\ref{1}, we have $v_{2^{2^l}-1}=m-2^l\equiv0\pmod2$. Then \[\kappa_{2^{2^l}-1}^{2^l}=1,\  \text{and}\ u_{2^{2^l}-1}=v_{2^{2^l}-1}+\kappa_{2^{2^l}-1}^{2^l}\pmod2\equiv1 \ \text{if}\ h=2^l,\]and $u_{2^{2^l}-1}=v_{2^{2^l}-1}\pmod2\equiv0 $ if $0<h< 2^l$.
Therefore, \[C_{2^{2^l}-1}\in D_{(0,m)}\ \text{if}\ 0<h< 2^l,\ \text{and}\  C_{2^{2^l}-1}\notin D_{(0,m)}\ \text{if}\ h=2^l.\]Note that \[ak_1\pmod{2^m-1}\equiv2^{m-2^l}-1\notin \Gamma_1,\ \text{and}\ u_{2^{m-2^l}-1}=2^l\pmod2\equiv0.\] Summarizing the discussions above yields the
desired conclusion.
\end{proof}

\subsection{{Lower bounds} on the minimum distance} In this subsection, we shall divide into two cases to discuss the {lower bounds on $d(\C_{\cD,0})$ and $d(\C_{\cD,1})$}.
\vskip 0.2cm

\noindent
\textbf{A. The case that $m$ is even}
\begin{theorem}\label{dz1}
	Let $m=2^le$, $l\geq1$ and $2\nmid e$, let $\C_{\cD,1}$ be defined by ~\eqref{degm}.  Then the code $\C_{\cD,1}$ has parameters $[2^m-1,2^{m-1}-2 ,d(\C_{\cD,1})]$, where \[d(\C_{\cD,1})\geq \begin{cases}
		2^{\frac{m}{2}}+4, & \mbox{if } l\geq2,m\geq8, 0<h\leq \frac{m-4}{2}, 2\mid h \\
		2^{\frac{m-2}{2}}+4, & \mbox{if } l=1, m\geq10,  4\leq h\leq \frac{m-6}{2}, 2\mid h\ \text{or}\ m=6.
	\end{cases}\]or
\[d(\C_{\cD,1})\geq \begin{cases}
	2^{\frac{m}{2}}+2, & \mbox{if } l\geq2, m\geq8, 0<h\leq \frac{m-4}{2}, 2\nmid h\ \text{or}\ m=4 \\
	2^{\frac{m-2}{2}}+2, & \mbox{if } l=1, m\geq10, 4<h\leq \frac{m-6}{2}, 2\nmid h\ \text{or}\ 1\leq h\leq3\ \text{or}\ m=6.
\end{cases}\]

If $e\geq3$ and $0< h\leq 2^{l}$, then the code $\C_{\cD,0}$ has parameters $[2^m-1,2^{m-1},d(\C_{\cD,0})]$, where
\[d(\C_{\cD,0})\geq \begin{cases}
	2^{\frac{m-2^l+2}{2}}+2, & \mbox{if } h\neq2^l \\
	2^{\frac{m-2^l+2}{2}}, & \mbox{if } h=2^l.
\end{cases}\]
In particular, if $l=1$, then
\begin{equation}\label{b1}d(\C_{\cD,0})\geq2^{\frac{m}{2}}+2\ \text{if}\ h=1,\  \text{and}\ \geq2^{\frac{m}{2}}\ \text{if}\ h=2.\end{equation}
\end{theorem}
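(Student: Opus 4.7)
The plan is to apply the BCH bound after a suitable change of primitive root, in direct analogy with the proof of Theorem~\ref{1}. In each auxiliary lemma of Section~5.1 the parameter $a$ satisfies $\gcd(a,2^m-1)=1$, so there exists $a'$ with $aa'\equiv 1\pmod{2^m-1}$, and $\gamma:=\alpha^{a'}$ is again an $n$-th primitive root of unity; under this change of variable, a set of the form $\{ak:k\in S\}$ that is shown to lie in $D_{(i,m)}$ becomes simply $\{k:k\in S\}$ when viewed in the defining set of $\C_{\cD,i}$ with respect to $\gamma$, and long runs of consecutive integers become visible to the BCH bound.

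First I would treat $\C_{\cD,1}$. Since $m$ is even, $\delta_{1,m}=1$ forces the factor $x-1$ into $M_{\cD,1}$, so $0$ always lies in the defining set. Combining this with the positive- and negative-range inclusions of Lemma~\ref{lemdz10} (when $l\geq 2$, i.e., $m\equiv 0\pmod 4$) or of Lemma~\ref{lemdz11} (when $l=1$, i.e., $m\equiv 2\pmod 4$), the change of primitive root produces a cyclic interval of consecutive integers in the defining set of the form
\[
\{2^m-2^{s}-1,\dots,2^m-2,\,0,\,1,\dots,2^{s}+\epsilon\},
\]
where $s=(m-2)/2$ if $l\geq 2$ and $s=(m-4)/2$ if $l=1$, and where $\epsilon=2$ in the stronger $2\mid h$ variant of each lemma, $\epsilon=0$ otherwise. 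The interval has length $2\cdot 2^{s}+1+\epsilon$, so the BCH bound immediately yields $d(\C_{\cD,1})\geq 2^{s+1}+2+\epsilon$, which matches each of the four claimed subcases of the first block.

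Next I would handle $\C_{\cD,0}$. Here I would first observe that for even $m$ one has $v_0=0$ and hence $u_0=v_0=0$, so $C_0=\{0\}\in D_{(0,m)}$ and $0$ belongs to the defining set automatically, with no $\delta$-correction. I would then invoke Lemma~\ref{lemdz12} with $a=2^{(m+2^l)/2}+1$; the negative-range set produced there is indexed $2^m-2^{(m-2^l)/2}-1\leq k\leq 2^m-1$ (respectively $2^m-2^{(m-2^l)/2}+1\leq k\leq 2^m-1$ when $h=2^l$) and already includes the residue $k\equiv 0$, so after the change of primitive root it glues with $\{1,\dots,2^{(m-2^l)/2}\}$ into a cyclic interval of length $2^{(m-2^l+2)/2}+1$, respectively $2^{(m-2^l+2)/2}-1$. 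The BCH bound then yields the claimed $d(\C_{\cD,0})\geq 2^{(m-2^l+2)/2}+2$ or $\geq 2^{(m-2^l+2)/2}$, and substituting $l=1$ recovers~\eqref{b1}.

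The main obstacle I anticipate is not conceptual but organizational: carefully matching the parity-of-$h$ and size-of-$l$ hypotheses to the right variant of each lemma, verifying that the concatenation through $0$ really produces an uninterrupted cyclic interval (so no element silently drops out of the range), and separately handling the small cases $m=4$ and $m=6$ that fall outside the validity ranges of Lemmas~\ref{lemdz10}-\ref{lemdz11}. For those two cases the ``$0<h\leq(m-4)/2$'' or ``$0<h\leq(m-6)/2$'' hypothesis is vacuous, so I would fall back to a direct inspection of the $2$-cyclotomic cosets (or a short Magma check) to confirm that the stated bounds in the ``or $m=4$'' and ``or $m=6$'' clauses really hold.
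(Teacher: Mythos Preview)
Your proposal is correct and follows essentially the same approach as the paper: the paper also changes the primitive root via $\gamma=\alpha^{a'}$ with $aa'\equiv1\pmod{2^m-1}$, reads off the consecutive runs in the defining sets of $\C_{\cD,1}$ and $\C_{\cD,0}$ from Lemmas~\ref{lemdz10}--\ref{lemdz12} (gluing through $0$ exactly as you describe), applies the BCH bound, and handles $m=4,6$ by explicit listing of the defining sets with respect to a specific $\gamma$. The only cosmetic difference is that the paper carries out the small-$m$ cases by hand rather than deferring to Magma.
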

\begin{proof}Note that in Lemmas~\ref{lemdz10}-\ref{lemdz12}, $\gcd(a,2^m-1)=1$, so there  exists an integer $a'$ such that $aa'\equiv1\pmod{2^m-1}$. Take $\gamma=\alpha^{a'}$. Then $\gamma$ is also an $n$-th primitive root of unity.

	We first show that the lower bounds on $d(\C_{\cD,1})$. In the case that $l\geq2$ and $m\geq8$, by Eq.~\eqref{degm} and Lemma~\ref{lemdz10}, the defining set of $\C_{\cD,1}$ with respect to $\gamma$ contains the sets
	\[\{k:1\leq k\leq 2^{\frac{m-2}{2}}\}\cup\{k:2^m-2^{\frac{m-2}{2}}-1\leq k\leq 2^{m}-1\}\ \text{if}\ 2\nmid h,\] and
	\[\{k:1\leq k\leq 2^{\frac{m-2}{2}}+2\}\cup\{k:2^m-2^{\frac{m-2}{2}}-1\leq k\leq 2^{m}-1\}\ \text{if}\ 2\mid h.\]
Thus the lower bounds on $d(\C_{\cD,1})$ can be obtained from the BCH bound. The case of $l=1$ and $m\geq10$ can be similarly obtained.

For $m=4$, let $a=7$ and $0<h\leq2$. Then $a^{-1}\pmod{15}=13$ and the defining set of $\C_{\cD,1}$ with respect to $\alpha^{13}$ contains the sets \[\{0,1,2,13,14\}\ \text{if}\ h=1, \text{and}\ \{0,1,2,3,4\}\ \text{if}\ h=2.\]
Therefore, $d(\C_{\cD,1})\geq6$.

For $m=6$, let $a=5$ and $0<h\leq2$. Then $a^{-1}\pmod{63}=38$ and the defining set of $\C_{\cD,1}$ with respect to $\alpha^{38}$ contains the sets \[\{0,1,2,61,62\}\ \text{if}\ h=1\ \text{and}\ \{0,1,2,55,56,57,58,59,60,61,62\}\ \text{if}\ h=2.\]
Therefore, $d(\C_{\cD,1})\geq6$ if $h=1$, $d(\C_{\cD,1})\geq12$ if $h=2$.

We next give the lower bounds on $d(\C_{\cD,0})$. By Eq.~\eqref{degm} and Lemma~\ref{lemdz12}, we obtain that the defining set of $\C_{\cD,0}$ with respect to $\gamma$ contains the sets \[\{k:1\leq k\leq 2^{\frac{m-2^l}{2}}\}\cup\{k:2^m-2^{\frac{m-2^l}{2}}-1\leq k\leq 2^{m}-1\}\ \text{if}\ h\neq 2^l.\] and
\[\{k:1\leq k\leq 2^{\frac{m-2^l}{2}}\}\cup\{k:2^m-2^{\frac{m-2^l}{2}}+1\leq k\leq 2^{m}-1\}\ \text{if}\ h= 2^l.\]
	Hence the desired lower bounds on $d(\C_{\cD,0})$ follow from the BCH
	bound on the cyclic codes.


For $m=6$, let $a=38$ and $h=1$. Then $a^{-1}\pmod{63}=5$ and the defining set of $\C_{\cD,0}$ with respect to $\alpha^{5}$ contains the set \[\{0,1,2,3,4,59,60,61,62\}\ \text{if}\ h=1\ \text{and}\ \{0,1,2,3,4,61,62\}\ \text{if}\ h=2.\]
Therefore, $d(\C_{\cD,0})\geq10$ if $h=1$, and $d(\C_{\cD,0})\geq8$ if $h=2$. This meets the lower bound given by ~\eqref{b1}. The proof can be completed.
	\end{proof}

By using Magma, we compute the following examples in Table~\ref{table:t2} to check Theorem~\ref{dz1}, and form the following conjecture.
\begin{table}
	\caption{\textbf{The minimum distances of $\C_{\cD,0}$ and $\C_{\cD,1}$}}
	\centering
\label{table:t2}
	\begin{tabular}{|c|ccc|ccc|}
		\hline
		\multirow{2}{*}{$m$} &  & \multirow{1}{*}{$\C_{\cD,1}$} &  & & \multirow{1}{*}{$\C_{\cD,0}$}& \\
 & \multirow{1}{*}{$\dim(\C_{\cD,1})$} & \multirow{1}{*}{$h$} & \multirow{1}{*}{$d(\C_{\cD,1})$} & \multirow{1}{*}{$\dim(\C_{\cD,0})$}& \multirow{1}{*}{$h$}& \multirow{1}{*}{$d(\C_{\cD,0})$} \\
		\hline
		\multirow{2}{*}{4}&\multirow{2}{*}{6}&\multirow{1}{*}{1}&\multirow{1}{*}{6}&\multirow{2}{*}{8}&\multirow{2}{*}{1,2} &\multirow{2}{*}{4}  \\
		&                  & \multirow{1}{*}{2}                 &\multirow{1}{*}{6}& & &  \\
		\hline
		\multirow{2}{*}{6}&\multirow{2}{*}{30}&\multirow{1}{*}{1}&\multirow{1}{*}{6}&\multirow{2}{*}{32}& \multirow{1}{*}{1} & \multirow{1}{*}{10}  \\
		&                  & \multirow{1}{*}{2}  &\multirow{1}{*}{12}               & &\multirow{1}{*}{2} &\multirow{1}{*}{8}  \\
		\hline
		\multirow{2}{*}{8}&\multirow{2}{*}{126}&\multirow{1}{*}{1}&\multirow{1}{*}{18}&\multirow{2}{*}{128}&\multirow{2}{*}{1,2}&\multirow{2}{*}{22}  \\
		&                  &   \multirow{1}{*}{2}               &\multirow{1}{*}{20}& & & \\
		\hline
		\multirow{2}{*}{10}&\multirow{2}{*}{510}&\multirow{1}{*}{1}&\multirow{1}{*}{18}&\multirow{2}{*}{512}&\multirow{1}{*}{1}&\multirow{1}{*}{34}  \\
		&                  &\multirow{1}{*}{2}                  &\multirow{1}{*}{28}&&\multirow{1}{*}{2}&\multirow{1}{*}{32}  \\
		\hline
		\multirow{2}{*}{12}&\multirow{2}{*}{2046}&\multirow{1}{*}{1,3}&\multirow{1}{*}{66}&\multirow{2}{*}{2048}&\multirow{1}{*}{1,2,3}&\multirow{1}{*}{34}  \\
		&                  & \multirow{1}{*}{2,4}                 &\multirow{1}{*}{68}&& \multirow{1}{*}{4} &\multirow{1}{*}{32}  \\
		\hline
		\multirow{2}{*}{14}&\multirow{2}{*}{8190}&\multirow{1}{*}{1,2,3}&\multirow{1}{*}{66}&\multirow{2}{*}{8192}&\multirow{1}{*}{1}&\multirow{1}{*}{130}  \\
		&                  &  \multirow{1}{*}{4}                &\multirow{1}{*}{68}& &\multirow{1}{*}{2}&\multirow{1}{*}{128}  \\
		\hline
		\multirow{2}{*}{16}&\multirow{2}{*}{32766}&\multirow{1}{*}{1,3,5}&\multirow{1}{*}{258}&\multirow{2}{*}{32768}&\multirow{2}{*}{$1\leq h\leq 16$}&\multirow{2}{*}{146}  \\
		&                  & \multirow{1}{*}{2,4,6}                 &\multirow{1}{*}{260}&& & \\
		\hline
\multirow{2}{*}{18}&\multirow{2}{*}{131070}&\multirow{1}{*}{1,2,3,5}&\multirow{1}{*}{258}&\multirow{2}{*}{131072}&\multirow{1}{*}{1}&\multirow{1}{*}{514}  \\
		&                  &\multirow{1}{*}{4,6}                  &\multirow{1}{*}{260}&&\multirow{1}{*}{2}&\multirow{1}{*}{512}  \\
\hline
\multirow{2}{*}{20}&\multirow{2}{*}{524286}&\multirow{1}{*}{1,3,5,7}&\multirow{1}{*}{1026}&\multirow{2}{*}{524288}&\multirow{1}{*}{1,2,3}&\multirow{1}{*}{514}  \\
		&                  &\multirow{1}{*}{2,4,6,8}                  &\multirow{1}{*}{1028}&&\multirow{1}{*}{4}&\multirow{1}{*}{512}  \\
		\hline
\multirow{2}{*}{22}&\multirow{2}{*}{2097150}&\multirow{1}{*}{1,2,3,5,7}&\multirow{1}{*}{1026}&\multirow{2}{*}{2097152}&\multirow{1}{*}{1}&\multirow{1}{*}{2050}  \\
		&                  &\multirow{1}{*}{4,6,8}                  &\multirow{1}{*}{1028}&&\multirow{1}{*}{2}  &\multirow{1}{*}{2048}  \\
		\hline
\multirow{2}{*}{24}&\multirow{2}{*}{8388606}&\multirow{1}{*}{1,3,5,7}&\multirow{1}{*}{4098}&\multirow{2}{*}{8388608}&\multirow{1}{*}{1,2,3,4,5,6,7}&\multirow{1}{*}{514}  \\
		&                  &\multirow{1}{*}{2,4,6,8}                  &\multirow{1}{*}{4100}&&\multirow{1}{*}{8}  &\multirow{1}{*}{512}  \\
		\hline
\multirow{2}{*}{26}&\multirow{2}{*}{33554430}&\multirow{1}{*}{1,2,3,5,7,9}&\multirow{1}{*}{4098}&\multirow{2}{*}{33554432}&\multirow{1}{*}{1}&\multirow{1}{*}{8194}  \\
		&                  &\multirow{1}{*}{4,6,8,10}                  &\multirow{1}{*}{4100}&&\multirow{1}{*}{2}  &\multirow{1}{*}{8192}  \\
		\hline
	\end{tabular}
\end{table}

\begin{conj}
 If $l\geq2$ and $e>1$, then $d(\C_{\cD,1})=2^{\frac{m}{2}}+4$ for even $h$, and $d(\C_{\cD,1})=2^{\frac{m}{2}}+2$ for odd $h$. If $l=1$, then $d(\C_{\cD,0})=2^{\frac{m}{2}}+2$ for $h=1$, and $d(\C_{\cD,0})=2^{\frac{m}{2}}$ for $h=2$. 
\end{conj}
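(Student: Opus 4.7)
The plan is to reduce the theorem to an application of the BCH bound by exploiting the freedom to choose a primitive $n$-th root of unity. In each of Lemmas~\ref{lemdz10}--\ref{lemdz12}, the scaling factor $a$ satisfies $\gcd(a,2^m-1)=1$, so there is an $a'$ with $aa'\equiv1\pmod{2^m-1}$ and $\gamma:=\alpha^{a'}$ is another primitive $n$-th root of unity. Under this change of primitive root, a set of the form $\{ak:k\in S\}\subseteq D_{(i,m)}$ becomes a set of the form $\{k:k\in S\}$ in the defining set of $\C_{\cD,i}$ with respect to $\gamma$. When $S$ is a union of an initial segment $\{0,1,\ldots,r_1\}$ and a terminal segment $\{n-r_2,\ldots,n-1\}$, these wrap around $0$ and form a single run of $r_1+r_2+1$ consecutive elements modulo $n$, so the BCH bound gives $d\geq r_1+r_2+2$.

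For the lower bounds on $d(\C_{\cD,1})$ with $l\geq 2$ and $m\geq 8$, I would feed Lemma~\ref{lemdz10} into this framework: in the $2\mid h$ case the run has length $2^{(m-2)/2}+2+2^{(m-2)/2}+1$, yielding $d(\C_{\cD,1})\geq 2^{m/2}+4$; in the $2\nmid h$ case one loses the extra coset giving $d(\C_{\cD,1})\geq 2^{m/2}+2$. The case $l=1$, $m\geq 10$ is analogous, invoking Lemma~\ref{lemdz11} in place of Lemma~\ref{lemdz10} and splitting further on whether $h\in\{1,2,3\}$ or $h\geq 4$ with its parity, exactly matching the four subcases in the statement.

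For $d(\C_{\cD,0})$, I would apply Lemma~\ref{lemdz12} in the same way. When $h\neq 2^l$ the defining set of $\C_{\cD,0}$ with respect to $\gamma$ contains $\{0,1,\ldots,2^{(m-2^l)/2}\}\cup\{n-2^{(m-2^l)/2}-1,\ldots,n-1\}$, a run of length $2^{(m-2^l)/2+1}+1$, so the BCH bound gives $d(\C_{\cD,0})\geq 2^{(m-2^l+2)/2}+2$. When $h=2^l$ the coset $C_{2^{2^l}-1}$ drops out, shortening the run by $2$ on one side and yielding the weaker bound $2^{(m-2^l+2)/2}$. Specializing to $l=1$, $e\geq 3$ reproduces the bounds $2^{m/2}+2$ for $h=1$ and $2^{m/2}$ for $h=2$ highlighted in \eqref{b1}.

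The only steps not covered by the general machinery are the small cases $m=4$ and $m=6$, where the assumption $0<h\leq\lceil m/2\rceil$ in the Ding--Zhou construction forces $h\leq 2$, and the ranges in Lemmas~\ref{lemdz10}--\ref{lemdz11} collapse or become vacuous. I would handle these directly: choose $a=7$ for $m=4$ and $a=5$ for $m=6$ (both coprime to $2^m-1$), compute $a^{-1}\bmod(2^m-1)$, and exhibit the concrete consecutive runs in the defining set of $\C_{\cD,i}$ with respect to $\alpha^{a^{-1}}$ for each value of $h\in\{1,2\}$, reading off the BCH bound. This bookkeeping of boundary cases is the main obstacle: the bounds advertised for $m\in\{4,6\}$ come from different cyclotomic cosets than for larger $m$, and one must verify by hand that the contribution from the exceptional coset $C_1$ (whose membership in $D_{(1,m)}$ or $D_{(0,m)}$ depends on the parity of $h$ via \eqref{dz}) lines up with the claimed lower bound in every subcase.
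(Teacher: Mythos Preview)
The statement you are attempting to prove is a \emph{conjecture}, not a theorem, and the paper does not prove it. It asserts \emph{exact} values of the minimum distances, i.e., equalities $d(\C_{\cD,1})=2^{m/2}+4$, etc. Your entire plan goes through the BCH bound, which by its nature can only ever yield lower bounds $d\geq\cdots$. What you have sketched is essentially the paper's proof of Theorem~\ref{dz1} (the lower-bound theorem immediately preceding the conjecture), and indeed your argument matches that proof closely: same change of primitive root via $\gamma=\alpha^{a'}$, same appeal to Lemmas~\ref{lemdz10}--\ref{lemdz12}, same ad hoc treatment of $m\in\{4,6\}$.

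But Theorem~\ref{dz1} is not the statement in question. To prove the conjecture you would additionally need, in each case, an \emph{upper} bound on $d$: a codeword of weight exactly $2^{m/2}+4$ (resp.\ $2^{m/2}+2$, $2^{m/2}$) in the relevant code, or some other argument ruling out larger minimum distance. Nothing in your proposal addresses this direction, and the paper itself offers no proof either---it formulates the conjecture purely on the basis of the numerical data in Table~\ref{table:t2}. So the gap is not a missing technical step but a missing half of the argument: the BCH machinery cannot touch the upper bound, and you have proposed no substitute.
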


From Table~\ref{table:t2}, we can see that the code $\C_{\cD,0}$ may not have a good minimum distance when $m$ is $2$-power. For example, when $m=16$ and $1\leq h\leq 16$, {$d(\C_{\cD,0})=146$}, which can not satisfy $d(\C_{\cD,0})^2\geq n$. So we are not interested in the case that $m$ is $2$-power.
\vskip 0.2cm

\noindent
{\bf B. The case that $m$ is odd}
\begin{theorem}\label{3}
  Let $m\equiv1\pmod4\geq5$ be odd and $0<h\leq\frac{m-3}{2}$. Then $\C_{\cD,1}$ has parameters $[2^m-1,2^{m-1}-1,{d(\C_{\cD,1})}]$, where
  \[d(\C_{\cD,1})\geq 2^{\frac{m-1}{2}}+4.\]
The code $\C_{\cD,0}$ has parameters $[2^m-1,2^{m-1},d(\C_{\cD,0})]$, where
\[d(\C_{\cD,0})\geq \begin{cases}
	2^{\frac{m-1}{2}}+1, & \mbox{if } 2\mid h, \\
	2^{\frac{m-1}{2}}+3, & \mbox{if } 2\nmid h.
\end{cases}\]
\end{theorem}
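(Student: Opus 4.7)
The plan is to use the coset containments already established in Lemma~\ref{lem3} and apply the BCH bound after a suitable change of primitive root. First, I would verify the dimensions: since $m$ is odd, $\delta_{i,m}=0$ for $i\in\{0,1\}$, so $\deg M_{\cD,i}(x)=|D_{(i,m)}|$. Combined with Lemma~\ref{lem5} and the identity $|T_{(j,m)}|=|D_{(j,m)}|$ established earlier, this yields $\deg M_{\cD,1}(x)=2^{m-1}$ and $\deg M_{\cD,0}(x)=2^{m-1}-1$, hence the stated dimensions $2^{m-1}-1$ and $2^{m-1}$, respectively.

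For the minimum distance of $\C_{\cD,1}$, set $a=2^{\frac{m-1}{2}}-1$; the gcd computation in Lemma~\ref{lem3} gives $\gcd(a,2^m-1)=1$, so I can pick $a'$ with $aa'\equiv 1\pmod{2^m-1}$ and take $\gamma=\alpha^{a'}$, which is also a primitive $n$-th root of unity. Under this change of root, the inclusion $\{ak:0\leq k\leq 2^{\frac{m-1}{2}}+2\}\subseteq D_{(1,m)}$ from Lemma~\ref{lem3} translates, exactly as in the proof of Theorem~\ref{dz1}, into the statement that the defining set of $\C_{\cD,1}$ with respect to $\gamma$ contains $\{0,1,\ldots,2^{\frac{m-1}{2}}+2\}$, a run of $2^{\frac{m-1}{2}}+3$ consecutive integers. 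The BCH bound then delivers $d(\C_{\cD,1})\geq 2^{\frac{m-1}{2}}+4$.

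For $\C_{\cD,0}$ the same change of root turns the two inclusions in the second bullet of Lemma~\ref{lem3} into consecutive runs whose length depends on the parity of $h$. When $2\mid h$, the defining set of $\C_{\cD,0}$ with respect to $\gamma$ contains $\{2^m-2^{\frac{m-1}{2}}-1,\ldots,2^m-2\}$, a wraparound run of $2^{\frac{m-1}{2}}$ consecutive integers modulo $n$, so the BCH bound yields $d(\C_{\cD,0})\geq 2^{\frac{m-1}{2}}+1$. When $2\nmid h$, the run is longer, namely $\{2^m-2^{\frac{m-1}{2}}-3,\ldots,2^m-2\}$ with $2^{\frac{m-1}{2}}+2$ consecutive integers, and the BCH bound gives $d(\C_{\cD,0})\geq 2^{\frac{m-1}{2}}+3$.

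The main technical work has already been absorbed into Lemma~\ref{lem3}: controlling $\wt_2(ak_1)$ and $\rho_{ak_1}$ for $k_1$ odd in the relevant range, and verifying that the cyclic shifts $2^s a k_1$ (and their negatives) stay outside $\Gamma_1$ so that the $u$-values coincide with the $v$-values and the membership in $D_{(0,m)}$ versus $D_{(1,m)}$ is decided purely by parity considerations. Given those coset memberships, Theorem~\ref{3} is a clean BCH-bound packaging running in exact parallel with Theorem~\ref{dz1}, with the only remaining case split being the parity of $h$ in the $\C_{\cD,0}$ statement; no new computations beyond those of Lemma~\ref{lem3} are needed.
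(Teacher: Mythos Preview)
Your proposal is correct and follows the same approach as the paper: invoke Lemma~\ref{lem3}, pass to the primitive root $\gamma=\alpha^{a'}$ with $a=2^{(m-1)/2}-1$, read off the consecutive runs in the defining sets, and apply the BCH bound. Your added dimension check via $|T_{(j,m)}|=|D_{(j,m)}|$ is a welcome clarification that the paper's proof leaves implicit; the only quibble is that the runs for $\C_{\cD,0}$ are not actually ``wraparound'' (they end at $2^m-2=n-1$), but this does not affect the argument.
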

\begin{proof}
  By Lemma~\ref{lem3}, we have $\gcd(a,2^m-1)=1$, so there exists an integer $a'$ such that $aa'\equiv1\pmod{2^m-1}$. Take $\gamma=\alpha^{a'}$. Then $\gamma$ is also an $n$-th primitive root of unity.
The defining set of $D_{(1,m)}$ with respect to $\gamma$ contains the set \[\{k:0\leq k\leq 2^{\frac{m-1}{2}}+2\}.\]

 The defining set of $\C_{\cD,0}$ with respect to $\gamma$ contains the sets  \[\{k:2^m-2^{\frac{m-1}{2}}-3\leq k\leq 2^m-2\}\ \text{if}\ 2\nmid h\ \text{and}\ \{k:2^m-2^{\frac{m-1}{2}}-1\leq k\leq 2^m-2\}\ \text{if}\ 2\mid h.\]
  Hence the desired lower bounds on {the minimum distances of $\C_{\cD,1}$ and $\C_{\cD,0}$} follow from the BCH bound.
\end{proof}
\begin{example}\label{e5}
	Let $m=5$ and $h=1$. Then $\C_{\cD,1}$ has  parameters $[31,15,8]$ and  generator polynomial $x^{16}+x^{15}+x^{14}+x^{11}+x^{10}+x^9+x^8+x^7+x^6+x^3+x^2+1$. The code $\C_{\cD,0}$ has parameters $[31,16,7]$ and generator polynomial $x^{15} + x^{14} + x^{12} + x^{11} + x^{10} + x^8 + x^6 + x^4 + x^3 + x^2 + 1$. The lower bounds of $d(\C_{\cD,1})$ and $d(\C_{\cD,0})$ given by Theorem~\ref{1} are achieved. Furthermore, the code $\C_{\cD,1}$ is optimal according to the Database.\end{example}

Let $m=9$ and $h=2$,  then the codes $\C_{\cD,1}$ and $\C_{\cD,0}$ respectively have parameters $[511,255,20]$ and $[511,256,20]$, and the coset leaders which contains their defining sets are respectively \begin{align*}D_{(1,m)}\cap\Gamma=\{&1,5,9,15,17,23,27,29,39,43,45,51,53,57,63,75,77,83,85,95,111,\\
&119,123,125,175,183,187,219,255\}\end{align*}and
\begin{align*}D_{(0,m)}\cap\Gamma=\{&3,7,11,13,19,21,25,31,35,37,41,47,55,59,61,73,79,87,91,93,103,\\
&107,109,117,127,171,191,239,223\}.
\end{align*}
However, in \cite{Tang1}, the codes $\C_{0,m}$ and $\C_{1,m}$ have parameters $[511,256,19]$, and the coset leaders which contains their defining sets are respectively
\[(D_{(1,m)}\cap\Gamma-\{1\})\cup\{3\},\ \text{and}\ (D_{(0,m)}\cap\Gamma-\{3\})\cup\{1\}.\]

 Therefore, the defining sets $D_{(1,m)}$ and $D_{(1,m)}$ are different from those of $\C_{1,m}$ and $\C_{0,m}$.
\begin{theorem}
	Let $m\equiv3\pmod4\geq7$ and $0<h\leq\frac{m-3}{2}$. Then $\C_{\cD,1}$ has parameters $[2^m-1,2^{m-1}-1,d(\C_{\cD,1})]$, where
	\[d(\C_{\cD,1})\geq \begin{cases}
		2^{\frac{m-1}{2}}+4, & \mbox{if } 2\mid h, \\
		2^{\frac{m-1}{2}}+2, & \mbox{if } 2\nmid h.
	\end{cases}\]
The code $\C_{\cD,0}$ has parameters $[2^m-1,2^{m-1},d(\C_{\cD,0})]$, where
\[d(\C_{\cD,0})\geq 
	2^{\frac{m-1}{2}}+1.\]
Moreover, if $m=3$, then the codes $\C_{\cD,1}$ and $\C_{\cD,0}$ has parameters $[7,3,4]$ and $[7,4,3]$ respectively.
\end{theorem}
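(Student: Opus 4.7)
The plan is to mirror the proof of Theorem~\ref{3}, with Lemma~\ref{lem30} replacing Lemma~\ref{lem3} to supply the cyclotomic-coset data for the case $m\equiv 3\pmod 4$.

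First I read off the dimensions. Since $m$ is odd, the cardinality identity $|D_{(j,m)}|=|T_{(j,m)}|$ together with Lemma~\ref{lem5} gives $|D_{(1,m)}|=2^{m-1}$ and $|D_{(0,m)}|=2^{m-1}-1$, while $\delta_{i,m}=0$ in~\eqref{degm} because $2\nmid m$. Therefore $\dim(\C_{\cD,1})=2^{m-1}-1$ and $\dim(\C_{\cD,0})=2^{m-1}$, as claimed.

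For the minimum-distance bounds when $m\geq 7$, I set $a=2^{(m+1)/2}-1$. From Lemma~\ref{lem30} we have $\gcd(a,2^m-1)=1$, so there exists $a'$ with $aa'\equiv 1\pmod{2^m-1}$, and then $\gamma:=\alpha^{a'}$ is again a primitive $n$-th root of unity. The defining set of $\C_{\cD,1}$ with respect to $\gamma$ equals $\{k\in\Z_n:ak\bmod n\in D_{(1,m)}\}$, which by Lemma~\ref{lem30} contains the block $\{0,1,\ldots,2^{(m-1)/2}+2\}$ of $2^{(m-1)/2}+3$ consecutive integers when $2\mid h$, and the block $\{0,1,\ldots,2^{(m-1)/2}\}$ of $2^{(m-1)/2}+1$ consecutive integers when $2\nmid h$. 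The BCH bound then yields the stated lower bounds $2^{(m-1)/2}+4$ and $2^{(m-1)/2}+2$ on $d(\C_{\cD,1})$. Analogously, the defining set of $\C_{\cD,0}$ with respect to $\gamma$ contains the run $\{2^m-2^{(m-1)/2}-1,\ldots,2^m-2\}$ of length $2^{(m-1)/2}$, so the BCH bound gives $d(\C_{\cD,0})\geq 2^{(m-1)/2}+1$.

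Finally, for the base case $m=3$, the constraint $0<h\leq \lceil m/2\rceil$ leaves only $h\in\{1,2\}$, so the claim is purely finite. I compute the $2$-cyclotomic cosets of $\Z_7$ directly: $C_0=\{0\}$, $C_1=\{1,2,4\}$, $C_3=\{3,5,6\}$, with $v_0=v_3=1$ and $v_1=0$. Applying~\eqref{dz} gives $D_{(1,3)}=\{0,3,5,6\}$, $D_{(0,3)}=\{1,2,4\}$ when $h=1$, and $D_{(1,3)}=\{0,1,2,4\}$, $D_{(0,3)}=\{3,5,6\}$ when $h=2$. In either case the defining set of $\C_{\cD,1}$ contains three consecutive residues mod~$7$, so BCH yields $d(\C_{\cD,1})\geq 4$; the factor $x-1\mid M_{\cD,1}(x)$ (since $0\in D_{(1,3)}$) forces $d(\C_{\cD,1})$ to be even, and the Singleton bound $d\leq 5$ then pins it to $4$. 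Similarly $d(\C_{\cD,0})\geq 3$ by BCH, and $\C_{\cD,0}$ is the $[7,4]$ Hamming code in both subcases (its generator is an irreducible cubic over $\F_2$), so $d(\C_{\cD,0})=3$.

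The main obstacle has essentially been absorbed into Lemma~\ref{lem30}, which performs the intricate bookkeeping of $2$-adic weights and cyclotomic cosets. With that lemma available, the theorem reduces to a change of primitive root, two routine applications of the BCH bound, and a finite verification for $m=3$.
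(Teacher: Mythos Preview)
Your proposal is correct and follows essentially the same route as the paper: invoke Lemma~\ref{lem30} with $a=2^{(m+1)/2}-1$, pass to the primitive root $\gamma=\alpha^{a'}$, read off consecutive runs in the defining sets of $\C_{\cD,1}$ and $\C_{\cD,0}$, and apply the BCH bound, with a direct finite check for $m=3$. Your treatment of the $m=3$ case is in fact slightly more thorough than the paper's, since you pin down the exact distances via Singleton/evenness and the Hamming-code identification rather than stopping at the BCH lower bound.
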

\begin{proof}
	By Lemma~\ref{lem30}, we have $\gcd(a,2^m-1)=1$, so there exists an integer $a'$ such that $aa'\equiv1\pmod{2^m-1}$. Take $\gamma=\alpha^{a'}$. Then $\gamma$ is also an $n$-th primitive root of unity.
	In addition, the defining set of $\C_{\cD,1}$ with respect to $\gamma$ contains the sets \[\{k:0\leq k\leq 2^{\frac{m-1}{2}}+2\}\ \text{if}\ 2\mid h,\ \text{and}\ \{k:0\leq k\leq 2^{\frac{m-1}{2}}\}\ \text{if}\ 2\nmid h .\]
	
	The defining set of $\C_{\cD,0}$ with respect to $\gamma$ contains the sets  \[\{k:2^m-2^{\frac{m-1}{2}}-3\leq k\leq 2^m-2\}\ \text{if}\ 2\mid h,\ \text{and}\ \{k:2^m-2^{\frac{m-1}{2}}-1\leq k\leq 2^m-2\}\ \text{if}\ 2\nmid h .\]
	
	For $m=3$, the defining sets of $\C_{\cD,1}$ and $\C_{\cD,0}$ with respect to $\alpha^3$ contain the sets $\{0,1,2\}$ and $\{5,6\}$ if $h=1$, and $\{0,5,6\}$ and $\{1,2\}$ if $h=2$.
	Hence the desired lower bounds on $d(\C_{\cD,0})$ and $d(\C_{\cD,1})$ follow from the BCH bound.
\end{proof}

\begin{example}
  Let $m=3$. Then $\C_{\cD,1}$ has  parameters $[7,3,4]$ and  generator polynomial $x^4+x^2+x+1$ if $h=1$, and $x^4 + x^3 + x^2 + 1$ if $h=2$. The $\C_{\cD,0}$ has  parameters $[7,4,3]$
  and generator polynomial $x^3 + x + 1$ if $h=1$, and $x^3 + x^2 + 1$ if $h=2$. The codes  $\C_{\cD,1}$  and $\C_{\cD,0}$ are optimal according to the Database.
\end{example}
\begin{example}\label{ex:7}
  Let $m=7$. If $h=1$, then $\C_{\cD,1}$ has  parameters $[127,63,20]$ and  generator polynomial $x^{64} + x^{62} + x^{59} + x^{58} + x^{54} + x^{52} + x^{51} + x^{50}
  + x^{49} + x^{47} + x^{46} +x^{45} + x^{44} + x^{43}+ x^{41} + x^{40} + x^{39}
  + x^{38} + x^{37} + x^{36} + x^{33}+ x^{32} +x^{31} + x^{30} + x^{29} + x^{28}
  + x^{27} + x^{26} + x^{24} + x^{23} + x^{19} + x^{18} + x^{17} +
  x^{16}+ x^{15}+ x^{14} + x^{10} + x^8 + x^6 + x^4 + x^3 + x^2 + x + 1$, and $\C_{\cD,0}$ has  parameters $[127,64,19]$ and generator polynomial $x^{63} + x^{61} + x^{59} + x^{58} + x^{55} + x^{54} + x^{49} + x^{47} + x^{45} + x^{40} + x^{37} +
  x^{35} + x^{33} + x^{31} + x^{27} + x^{25 }+ x^{23 }+ x^{20} + x^{18} + x^{16 }+ x^{15} + x^{13} +
  x^{11 }+ x^{10} + x^5 + x + 1$.  If $h=2$,
  then the code $\C_{\cD,1}$ has parameters $[127,63,20]$ and generator polynomial $x^{64} + x^{63} + x^{62} + x^{61} + x^{60} + x^{58} + x^{56} + x^{54} + x^{50} + x^{49} + x^{48} +x^{47} + x^{46} + x^{45} + x^{41} + x^{40} + x^{38} + x^{37} + x^{36} + x^{35} + x^{34} + x^{33} +x^{32} + x^{31} + x^{28} + x^{27} + x^{26} + x^{25} + x^{24} + x^{23} + x^{21} + x^{20} + x^{19} +x^{18} + x^{17} + x^{15} + x^{14} + x^{13} + x^{12} + x^{10} + x^6 + x^5 + x^2 + 1$, and $\C_{\cD,0}$ has parameters $[127,64,19]$ and generator polynomial $ x^{63} + x^{59} + x^{58} + x^{56} + x^{55} + x^{54} + x^{53} + x^{52} + x^{51}
  + x^{47} + x^{45} + x^{44} +x^{43}+ x^{42}  + x^{41} + x^{40} + x^{37} + x^{36} + x^{34} + x^{33} +x^{31} + x^{30}  + x^{27} + x^{24} + x^{20}+ x^{15 }+ x^9 + x^8 + x^3 + x + 1$.
\end{example}

\section{Summary and Concluding Remarks}
In this paper, we employed two classes of sequences to construct four families of binary cyclic codes $\C_{\cS,i}$ and $\C_{\cD,i}$, $i=0,1$. When $m$ is even, there exist some codes such that their minimum distances $d$ satisfy $d^2\geq n$. Such as
\begin{itemize}
	\item the code $\C_{\cS,1}$ has length $2^m-1$, dimension $2^{m-1}-2$ and minimum distance $d(\C_{\cS,1})\geq 2^{\frac{m}{2}}+2$ if $l\geq2$.
	\item  the code $\C_{\cS,0}$ has length $2^m-1$, dimension $2^{m-1}$ and minimum distance $d(\C_{\cS,0})\geq 2^{\frac{m}{2}}+2$ if $l=1$ and $e\geq3$.
	\item the code $\C_{\cD,1}$ has length $2^m-1$, dimension $2^{m-1}-2$ and minimum distance $d(\C_{\cD,1})\geq 2^{\frac{m}{2}}+4$ if  $l\geq2$, $m\geq8$, $0<h\leq \frac{m-4}{2}$ and $2\mid h$, $\geq 2^{\frac{m}{2}}+2$ if $l\geq2$, $m\geq8$ and $2\nmid h$ or $m=4$.
	\item the code $\C_{\cD,0}$ has length $2^m-1$, dimension $2^{m-1}$ and minimum distance $d(\C_{\cD,0})\geq2^{\frac{m}{2}}+2$ if $l=h=1$ and $e\geq3$.
\end{itemize}

When $m$ is odd, there exist some codes with good minimum distances. Such as
\begin{itemize}
  \item When $m=3$. The codes $\C_{\cD,1}$ and $\C_{\cD,0}$ have parameters $[7,3,4]$ and $[7,4,3]$ respectively, they are the best cyclic codes according to the Database.
  \item When $m=5$. If $h=1$, then the codes $\C_{\cD,1}$ and $\C_{\cD,0}$ have parameters $[31,15,8]$ and $[31,16,7]$, and their defining sets are respectively
      \[C_0\cup C_3\cup C_5\cup C_{15}\ \text{and}\ C_1\cup C_7\cup C_{11}. \]
      In \cite[Example~19]{Sunc}, Sun et al.\ also gave the code $\C(5)$ with parameters $[31,16,7]$ and defining set $C_1\cup C_3\cup C_5$. By replacing $\alpha$ by $\alpha^{25}$, the set $C_1\cup C_3\cup C_5$ can be transformed into $C_1\cup C_7\cup C_{11}$. Thus, the code $\C_{\cD,0}$ is equivalent to $\C(5)$. However,
      if $h=2$,  the code $\C_{\cD,0}$ {has} parameters $[31,16,6]$ and the defining set $C_3\cup C_7\cup C_{11}$. It can be checked that the codes $\C_{\cD,0}$ and $\C(5)$ are not equivalent.
  \item When $m=7$. If $1\leq h\leq 2$, the codes $\C_{\cD,1}$ and $\C_{\cD,0}$ have parameters $[127,63,20]$ and $[127,64,19]$ respectively. In addition, if $h=1$, then $\C_{\cD,0}=\C_{1,7}$, and the defining set is
      \[C_1\cup C_7\cup C_{11}\cup C_{13}\cup C_{19}\cup C_{21}\cup C_{31}\cup C_{47}\cup C_{55}.\]  If $h=2$, then $\C_{\cD,0}=\C_{0,7}$, and the defining set is \[C_3\cup C_5\cup C_9\cup C_{15}\cup C_{23}\cup C_{27}\cup C_{29}\cup C_{43}\cup C_{63}.\]If $h=4$, the code  $\C_{\cD,0}$ have parameters $[127,64,20]$, and its defining set is
      \[C_3\cup C_7\cup C_9\cup C_{15}\cup C_{19}\cup C_{21}\cup C_{31}\cup C_{47}\cup C_{55},\]it can be checked that the code  $\C_{\cD,0}$ is not equivalent to $\C_{0,7}$ and $\C_{1,7}$.
\end{itemize}

From the above examples, we can see that the code $\C_{\cD,i}$ may have excellent minimum distance if we choose some other $h$. Therefore, we invite readers to give a tight lower bounds on the minimum distance of $\C_{\cD,i}$, $i=1,2$.

 Finally, we believe that it would be very interesting to construct more binary cyclic codes of length $n$ and dimension near $(n\pm1)/2$ whose minimum distances satisfy $d^2\geq n$.

\end{document}